\def\markboth#1#2{\def\leftmark{\@IEEEcompsoconly{\sffamily}\MakeUppercase{\protect#1}}%
\def\rightmark{\@IEEEcompsoconly{\sffamily}\MakeUppercase{\protect#2}}}
\newcommand{\Hb}{\mathbf{H}}
\newcommand{\Sb}{\mathbf{\Sigma}}
\newcommand{\F}{\mathbf{F}}
\newcommand{\I}{\mathbf{I}}
\newcommand{\Y}{\mathbf{Y}}
\newcommand{\Z}{\mathbf{Z}}
\newcommand{\X}{\mathbf{X}}
\newcommand{\zero}{\mathbf{0}}
\newcommand{\x}{\mathbf{x}}
\newcommand{\y}{\mathbf{y}}
\newcommand{\vmu}{\mathbf{\boldsymbol{\mu}}}
\newcommand{\z}{\mathbf{z}}
\newcommand{\tr}{\textnormal{tr}}
\newcommand{\Ex}[2]{{\textnormal{E}_{#1}\left[#2\right]}}
\newcommand{\CInf}[3]{{\textnormal{I}\left(#1;#2|#3\right)}}
\newcommand{\Inf}[2]{{\textnormal{I}\left(#1;#2\right)}}
\newcommand{\Ent}[1]{{\textnormal{H}\left(#1\right)}}
\newcommand{\dEnt}[1]{{\textnormal{h}\left(#1\right)}}
\newtheorem{lemma}{Lemma}
\newtheorem{remark}{Remark}
\newtheorem{corollary}{Corollary}
\newtheorem{proposition}{Proposition}
\title{Unified Capacity Limit of Non-Coherent Wideband Fading Channels}
\author{Felipe G\'omez-Cuba$^*$, \emph{Student Member, IEEE}, Jinfeng Du$^*$, \emph{Member, IEEE}, Muriel M{\'e}dard, \emph{Fellow,
IEEE},  and Elza Erkip, \emph{Fellow, IEEE}
\thanks{This work was presented in part at IEEE International Symposium on Information Theory, Hong Kong, July 2015~\cite{CDME15}.}
\thanks{This work is supported in part by the MIT Wireless Center, the National Science Foundation under Grant 1302336, the NYU WIRELESS, and the Swedish Research Council (VR) under Grant 637-2013-473. This work is also funded by FPU12/01319 MECD, University of Vigo and Xunta de Galicia, Spain.}
\thanks{F. G\'omez-Cuba is with AtlantTIC, University of Vigo, C.P. 36310 Vigo, Spain  (Email:  fgomez@gti.uvigo.es).}%
\thanks{J. Du is with Nokia Bell Labs, Holmdel, NJ 07733, USA. He was with Massachusetts Institute of Technology, Cambridge, MA, USA  (Email: jinfeng.du@bell-labs.com). 
}
\thanks{M. M\'edard is with  Massachusetts Institute of Technology,
Cambridge, MA 02139, USA (Email: medard@mit.edu).}
\thanks{E. Erkip is with NYU Tandon School of Engineering, Brooklyn, NY 11201, USA (Email: elza@nyu.edu).}%
\thanks{$^*$The first two authors contributed equally to this work.}
}
\begin{document}
\maketitle 

\begin{abstract}
In non-coherent wideband fading channels where energy rather than spectrum is the limiting resource, peaky and non-peaky signaling schemes have long been considered species apart, as the first approaches asymptotically the capacity of a wideband AWGN channel with the same average SNR, whereas the second reaches a peak rate at some finite \emph{critical bandwidth} and then falls to zero as bandwidth grows to infinity. In this paper it is shown that this  distinction is in fact an artifact of the limited attention paid in the past to the product between the bandwidth and the fraction of time it is in use. This fundamental quantity, called \emph{bandwidth occupancy}, measures average bandwidth usage over time.  
For all signaling schemes with the same bandwidth occupancy, achievable rates approach to the wideband AWGN capacity within the same gap  as the bandwidth occupancy approaches its critical value, and decrease to zero as the occupancy goes to infinity. This unified analysis produces quantitative closed-form expressions for the ideal bandwidth occupancy, recovers the existing capacity results for (non-)peaky signaling schemes, and unveils a trade-off between the accuracy of approximating capacity with a generalized Taylor polynomial and the accuracy with which the optimal bandwidth occupancy can be bounded.

\end{abstract}

\begin{IEEEkeywords}
Wideband regime, peaky signals, non-peaky signals, bandwidth occupancy
\end{IEEEkeywords}

\section{Introduction and Related Work}
\label{sec:introduction}
Recently there has been great interest in wireless channels with a large bandwidth, owing in part to the prospective investments onto the millimeter wave bands, where vast quantities of new spectrum is readily available  \cite{Pi2011,PietBRPC:12,RanRapEr:14,Rappaport2014-mmwbook}.
In a frequency selective fading channel where there is no channel state information at the receiver (CSIR) or the transmitter, the wideband capacity is affected by the growing uncertainty in the channel impulse response. As bandwidth grows while energy is constrained, it becomes infeasible to estimate the channel coefficients to a precision sufficient for coherent detection. Moreover, if the transmitted signal power {}{is spread} across all the available bandwidth and time slots, the desired signal would be buried by the channel uncertainty {}{when} bandwidth is too large. 
M\'edard and Gallager proved this \cite{journals/tit/MedardG02} through an upper bound to the rate that is proportional to the ratio between the fourth moment of the signal ($\Ex{}{|x|^4}$) and its bandwidth ($B$), i.e., $R<\propto {\Ex{}{|x|^4}}/{B}$. That is, to achieve rates above zero when $B\to\infty$, one has to make $\Ex{}{|x|^4}$ grow at least as fast as $B$ by concentrating the power of the signal in a vanishing fraction of its transmitted symbols (i.e. infrequent bursts of very large power).

In this paper we investigate the capacity bounds of non-coherent wideband fading channels in multi-input multi-output (MIMO)  setup where both the signaling bandwidth and signal peakiness are design parameters.
The channel is assumed to be rich scattering, frequency selective, block fading with a coherence time $T_c$ and a delay spread $D$, such that the channel frequency response becomes uncorrelated for frequencies apart from more than one coherence bandwidth $B_c{\triangleq} 1/D$. The channel coherence length, $B_cT_c$, is assumed to be large for capacity analysis purposes{}{, as in almost all practical channels, $B_cT_c\gg1$}. In our expressions we temporarily treat $B_cT_c$ as a fixed parameter to derive closed-form expressions, where approximation errors originated from $B_cT_c{\gg} 1$ are highlighted in small-$o$ expressions parametrized by higher order terms of $B_cT_c$. We further assume that $B_cT_c{>}N_t$, which is easily satisfied in typical systems where the number of transmit antennas is not massive. 
We generalize the analysis method in \cite{journals/twc/LozanoP12}, developed for non-peaky signaling in single-input single-output (SISO)  systems, to MIMO systems and extend it to arbitrary level of signal peakiness by enforcing a transmission duty cycle $\delta{\in}(0,1]$. The duty cycle prescribes a bursty transmission scheme where the transmitter is active only for a fraction $\delta$ of time with boosted signal power $P/\delta$ harnessed from the $(1{-}\delta)$ silent-cycle. 
  {Denoting by $C(B)$ the capacity of the unconstrained non-coherent channel} and by $C(B,\delta)$ the maximal rate achieved by using bandwidth $B$ and duty cycle $\delta${}{, for all $B>0$ and  $\delta\in(0,1]$, we have} 
\[ C(B,\delta)< C(B) \leq C^\infty\triangleq {N_{\mathrm{r}}P}/{N_0} \mbox{ [nats/s]},\] 
where $C^{\infty}$ is the limit capacity of the coherent channel at infinite bandwidth, $P$ is the received signal power, $N_0$ is the noise power spectral density, and $N_{\mathrm{r}}$ is the number of receive antennas.
{}{Note that} the first inequality is strict because we do not exploit the position of the active symbols to convey information.
 We show in Sec.~\ref{sec:Bdlimit} that  $C(B, \delta)$ is upper and lower bounded by 
\[ R^{\mathrm{LB}}(\delta B) \leq C(B, \delta) \leq R^{\mathrm{UB}}(\delta B).\]
Note that both the upper and lower bounds, up to a small approximation error $o(1/\delta B)$, depend on $B$ and $\delta$ only through the product $\delta B$, which measures average bandwidth usage over time and is named the ``bandwidth occupancy''.
 Our results show that for a series of signaling schemes with finite signaling bandwidth $B$ larger or equal to a \textit{critical bandwidth occupancy} $(\delta B)_{\mathrm{crit}}$,  which falls in a range prescribed by closed-form expressions,
it is possible to achieve rates close to  $C^\infty$ within the same rate penalty
 \begin{align}\label{eqn:low:peak}
 C(B, \delta) \geq N_\mathrm{r}\frac{P}{N_0}-{\Delta_C},
\quad {\Delta_C}=N_\mathrm{r}\frac{P}{N_0}\sqrt{\frac{1+\log B_cT_c}{B_cT_c}(\kappa-2+N_{\mathrm{t}}+N_{\mathrm{r}})\log\pi},
 \end{align} 
as long as the duty cycle is $\delta{=}\frac{(\delta B)_{\mathrm{crit}}}{B}$. Here  $N_{\mathrm{t}}$ is the number of transmit antennas and  $\kappa{>}0$ is the kurtosis (whose definition is deferred to Sec.~\ref{sec:model}) of the channel. {}{Thus}, it is possible to approach $C^\infty$ up to the same gap with {}{any} $\delta{\in}(0,1]$. {}{Note also that} $B_cT_c{\gg}1$ leads to $\Delta_C {\simeq} 0$ and $R(\delta B^{\textrm{crit}}) {\simeq} C^\infty$.%
Furthermore, we show in Sec.~\ref{sec:polynomial} that the {}{analysis of $C(B)$ with peaky signaling} in literature  \cite{Zheng2007noncoherent,Ray2007noncoherent}  {}{experiences exactly this} same gap to $C^\infty$, {}{although we obtained}~\eqref{eqn:low:peak} using non-peaky signals~\cite{journals/twc/LozanoP12}  and a power-boosting duty cycle $\delta{\in}(0,1]$. 
Fig.~\ref{fig:CBCBd} illustrates the relation between our bounds $C(B,\delta)$, capacity $C(B)$, and the coherent wideband channel limit $C^\infty$. 
\begin{figure}[t]
  \centering
  \includegraphics[width=.6\columnwidth]{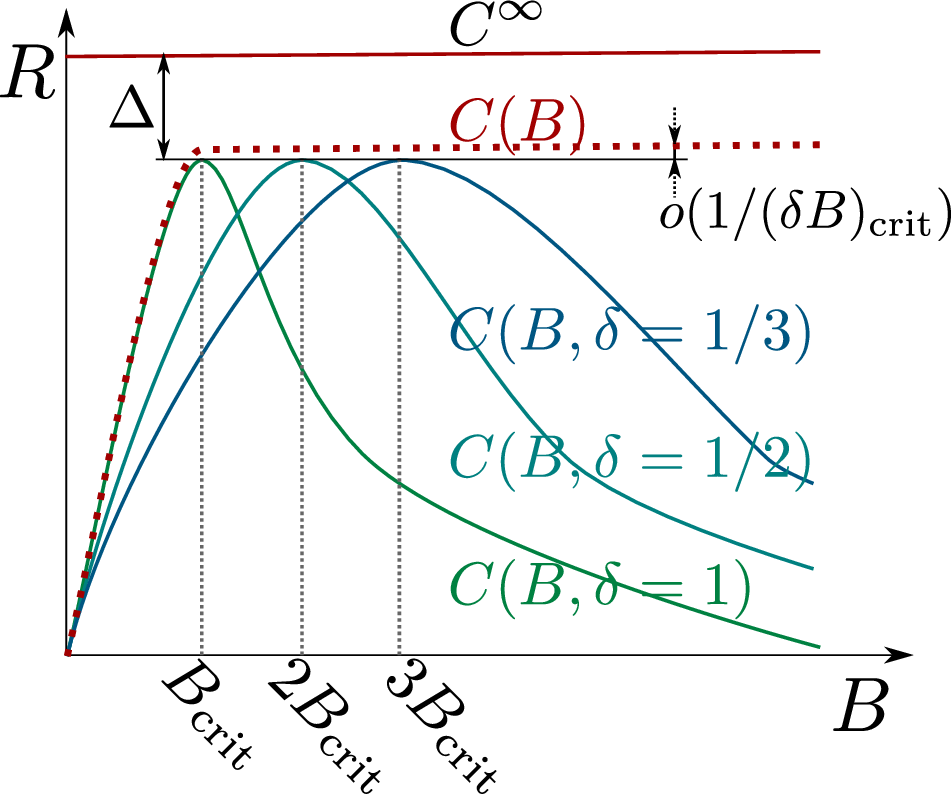}
  \caption{All transmission strategies with the same bandwidth occupancy $\delta B {=}(\delta B)_{\mathrm{crit}}$ achieve the same polynomial approximation of $C^\infty$ at different bandwidths. $C(B)$ is separated from {}{the maximum $C(B,\delta)$} by a difference of $o(1/(\delta B)_\mathrm{crit})$.
  }
  \label{fig:CBCBd}
  \end{figure}
 
The main contribution of this paper is the unified approximation of $C^\infty$ with peaky and non-peaky signaling, showing that these two extremes can be connected by all level of peakiness parametrized by the duty cycle $\delta{\in}(0,1]$. All signaling schemes $(B, \delta)$ with the same bandwidth occupancy $\delta B {=} (\delta B)_{\mathrm{crit}}$ approach  $C^\infty$ within the same capacity gap up to a small approximation error of $o(1/(\delta B)_{\mathrm{crit}})$.  We have also derived closed-from expressions for capacity bounds and critical bandwidth occupancy for all values $\delta{\in}(0,1]$, which provide valuable engineering insights and tools to quantify the resources needed to approach {}{$C^\infty$}. As a byproduct, we obtained a group of closed-form bounds {}{to} the range of $(\delta B)_{\mathrm{crit}}$ {}{that are implicit in the existing literature~\cite{Zheng2007noncoherent,Ray2007noncoherent}. These} parametric bounds can be {}{tuned} based on an \textit{accuracy-resolution tradeoff} to complement the range {}{identified} in our non-peaky signaling analysis.  

\subsection{Related Work}

The results in \cite{journals/tit/MedardG02} have been extended to signals with output fourth-order constraint \cite{Hajek2002a} or small input peakiness constraint \cite{Hajek2002b}. 
Telatar and Tse \cite{journals/tit/TelatarT00} related channel uncertainty to the number of resolvable independent paths, and showed that in a rich scattering environment where {}{this number} increases with $B$, the rate can grow as long as the signal power in each path is not too low, but it starts decreasing when the number of paths is above a critical value.

The capacity of a wideband fading channel {}{achieves} \textit{first order optimality} if, as $B$ goes to infinity, it has the same limit as a wideband additive white Gaussian noise (AWGN) channel. {}{This} has been studied in \cite{journals/tit/Verdu02,Medard2005,journals/tit/TelatarT00,Zheng2007noncoherent,Ray2007noncoherent} and {}{the linear in power capacity limit for MIMO  is}
\[\lim_{B\to\infty} C(B)^{\mathrm{noncoherent}}=  \lim_{B\to\infty} C(B)^{\mathrm{AWGN}} =\frac{N_{\mathrm{r}}P}{N_0}.\]

To quantify the ``exchange rate'' of bandwidth to capacity in the asymptotic regime where $B\to\infty$, the concept of \textit{wideband slope} was introduced in~\cite{journals/tit/Verdu02}. A larger wideband slope means that higher rate gain is obtained given the same amount of extra bandwidth. The wideband slope is {}{studied} in~\cite[Theorem 9]{journals/tit/Verdu02} based on the second order term of a Taylor series expansion of the spectral efficiency ($C/B$, in nats/s/Hz) with respect to the signal-to-noise  ratio (SNR) at each receive antenna, $\textrm{SNR}{\triangleq}P/(BN_0)$.  
The wideband slope is inversely proportional to the second order derivative of the spectral efficiency at $\textrm{SNR}{=}0$, which is finite for AWGN and coherent fading channels (i.e., with perfect CSIR) but $-\infty$  for non-coherent scenarios (i.e., with no CSIR). {}{Thus the coherent fading channel is \textit{second order optimal} but the non-coherent channel is not.} This abrupt distinction contrasts with the intuition that, as the channel coherence time $T_c$ and/or frequency $B_c$ grow, channel estimation becomes increasingly rewarding and the capacity of the non-coherent channel converges to the capacity of the coherent channel.   This contradiction was resolved in \cite{Zheng2007noncoherent,Ray2007noncoherent} by showing that in non-coherent Rayleigh fading channels the spectral efficiency  $C(B)/B$ is better represented by a generalized Taylor polynomial of order $1{+}\alpha<2$,
\begin{align}
\label{eqn:Ray}
\frac{C(B)}{B} {=}  N_{\mathrm{r}} \left(\frac{P}{N_0 B}\right) {-} \frac{N_{\mathrm{r}}(N_{\mathrm{r}}{+}N_{\mathrm{t}})}{2N_{\mathrm{t}}}\left(\frac{P}{N_0 B}\right)^{1+\alpha}+o(\frac{1}{B^{1+\alpha}}), [\mbox{nats/s/Hz}],
\end{align}
where the exponent $\alpha{\in}(0,1)$ grows with increasing $B_cT_c$. The first term equals $C^{\infty}/B$, representing a {}{\textit{first order optimal}} upper bound of the spectral efficiency when rate is power-limited. 
The third term captures the approximation error, that vanishes faster than $B^{-(1{+}\alpha)}$ as $B{\to}\infty$. 
The second term represents the penalty from lack of channel knowledge. It contains $\textrm{SNR}^{1{+}\alpha}$, a sub-quadratic term  ($1{+}\alpha{<}2$) that characterizes the convergence speed of the spectral efficiency for non-coherent fading channels. Representing \eqref{eqn:Ray} by the second order Taylor polynomial  leads to an infinite coefficient to the term $\textrm{SNR}^{2}$ {}{(wideband slope) and lack of second order optimality} as in~\cite{journals/tit/Verdu02}.
In this paper the word ``polynomial''  refers to these generalized Taylor polynomials with real-valued exponents.
 
Although peaky signals are imperative to achieve first order optimality \cite[Th. 7]{journals/tit/Verdu02}, they are challenging to synthesize owing to hardware non-linearity and the infinite amount of bandwidth they require in non-coherent channels. If a small gap from $C^\infty$ at a large but finite bandwidth is admissible, which is the case in all practical applications of asymptotic results, recent works have shown that non-peaky signals may suffice.
For example, Zhang and Laneman~\cite{Zhang07} investigated the achievable rate of phase-shift keying (PSK) for frequency-flat time-varying non-coherent Rayleigh fading channels. Under average power constraints, this signaling scheme approaches the wideband capacity limit for low but not too low SNRs. For signals subject to both peak and average power constraints, it was observed in~\cite{Wang09} that the gap between capacity upper and lower bounds can be very small for discrete-time frequency-flat Rayleigh fading channels. The capacity of non-coherent time-frequency selective wide-sense stationary uncorrelated scattering (WSSUS) channels with both peak and average power constraints has been studied in~\cite{Durisi}, where bell-shaped capacity upper and lower bounds were established and the capacity optimal bandwidth, the \textit{critical bandwidth},  was coarsely identified as a function of the peak power and the scattering function. For flat scattering functions, the capacity bounds depend on the system bandwidth and the input-signal peak constraint only through their ratio. The results in~\cite{Durisi} have been extended to MIMO in \cite{SDBP09}, where the impact of transmitter/receiver antenna correlation on capacity was also investigated. 
Lozano and Porrat \cite{journals/twc/LozanoP12} considered non-peaky signaling in SISO systems under a general fading distribution. Their results show that, when bandwidth is not too large, there is a transitory first stage where rate $R(B)$ grows with $B$ before approaching a maximum $R(B_{\textrm{crit}})$ at the critical bandwidth $B{=}B_{\textrm{crit}}$, beyond which the rate $R(B)$ decreases to zero as $B$ grows unbounded. By resorting to computation of mutual information rather than the capacity analysis as in~\cite{Durisi, SDBP09}, they provided closed-form expressions to the maximum rate and the corresponding capacity gap,  
 \begin{align}\label{eqn:Lozano}
   R(B_{\textrm{crit}}) = \frac{P}{N_0} -{\Delta},  \quad {\Delta}=\frac{P}{N_0}\sqrt{\frac{1+\log B_cT_c}{B_cT_c}\kappa\log\pi},
 \end{align}
 where  ${\Delta}$  vanishes with increasing coherence length $B_cT_c$. For Rayleigh fading, closed-form expressions for the range of $B_{\textrm{crit}}$ were also derived.

{}{Even though} \cite[Th. 7]{journals/tit/Verdu02} found that peaky signaling is {}{imperative} to achieve first order optimality, the definition of first order optimality {}{enforces an implicit} requirement to make bandwidth grow as high as possible ($B\to\infty$). {}{Thus, only those inputs that approach $C^\infty$ when $B$ is infinite are covered by \cite[Th. 7]{journals/tit/Verdu02}. What our results show is that $C^\infty$ can be approached as well} using a finite bandwidth $B$ and non-peaky signaling.

Unlike in~\cite{Zheng2007noncoherent,Ray2007noncoherent} where the non-coherent wideband fading channel capacity $C(B)$ is obtained by using the position of signal pulses in the frequency domain  (i.e., FSK) to convey information, in our analysis the position of actively transmitted symbols in the time domain, which collectively defines the active-cycle, is revealed  in advance to the receiver and therefore bears no information. 

Our  capacity bounds are based on computation of mutual information  with  constrained input signal peakiness -- in the sense of kurtosis -- that is controlled by enforcing a duty cycle $\delta\in(0, 1]$. This is in contrast to~\cite{Durisi, SDBP09} where capacity analysis is used with peak constraint on the amplitude of transmitting signals. Our choice of mutual information analysis can be justified from two aspects: even thought we do not design inputs to achieve the capacity bounds we can guarantee such inputs exist as long as the channel and noise are stationary weakly mixing processes, see \cite[Prop.~2.1]{Hajek2005}; the rate upper and lower bounds and the range of the critical bandwidth occupancy can be described in closed-form expressions, which are otherwise difficult to obtain using capacity analysis, see~\cite{Durisi, SDBP09}.

Our choice of using  duty cycle rather than peak constraint on signal amplitude~\cite{Wang09, Durisi} to control the signal peakiness can be justified as follows: given the same average power constraint, a peak constraint on signal amplitude will limit the peak-to-average power ratio (PAPR), which is sufficient but not necessary to generate a constraint on signal peakiness. Signals with finite peakiness may have infinite PAPR (e.g., Gaussian signal has infinite PAPR but only a small kurtosis $\kappa{=}2$). It must be noted that in non-coherent wideband fading channels, capacity is related to the peakiness in the kurtosis sense~\cite{journals/tit/MedardG02, journals/tit/Verdu02}.

The rest of this paper is organized as follows. We introduce the system model in Sec.~\ref{sec:model} and present our unified analysis of wideband non-coherent channel in Sec.~\ref{sec:Bdlimit}. We describe our non-coherent polynomial approximation to coherent capacity, and discuss its relation with literature in Sec.~\ref{sec:polynomial}. Finally our conclusions are in Sec.~\ref{sec:conclusion}.

\section{System Model}\label{sec:model}

We consider a rich scattering, frequency selective, block fading, $N_\mathrm{t}\times N_\mathrm{r}$ MIMO wideband channel with an impulse response $h(t)^{(u,v)}$ between antennas $(u,v)$. For compactness we assume that all channels experience a coherence time $T_c$ and a delay spread $D$ and the channel frequency response becomes uncorrelated for frequencies apart more than one coherence bandwidth $B_c{\triangleq}1/D$. We focus only on the frequency signaling scheme since it is known \cite{journals/twc/LozanoP12} that differences between frequency and time signaling only affect the scaling with bandwidth in its vanishing higher order terms. In the following we present the characteristics of the discrete-time system model\footnote{The equivalence between the discrete-time and continuous-time channel models for SISO is established in~\cite{TSP04} using sampling and DFT, and in~\cite{Durisi12} using pulse shaping filter banks with Weyl-Heisenberg projection. Our result uses MIMO in a rich scattering environment and we provide explicit mapping of the channel coefficients between two different discrete-time models.}. Justification of our choice of the wideband fading model is presented in Appendix~\ref{app:dvSysMod}.

Our model starts from a continuous-time wideband fading channel, followed by the discretization/sampling process on the input-output signals. This provides a signaling scheme where every $T_c$ seconds, the transmitted signal $x^{(u)}[n]$ with bandwidth $B$ carries $K{=}BT_c$ complex samples on antenna $u{\in}[0,N_\mathrm{t}{-}1]$. Taking a $K$-point DFT of the complex samples for each antenna and then stacking all the $N_\mathrm{t}$ vectors up, the transmitted codeword is uniquely defined by the $N_\mathrm{t}K\times 1$ vector $\x$ that satisfies the average power constraint
\[\frac{1}{K}\Ex{}{|\x|^2}\leq PT_c.\]
 For $i{=}k N_\mathrm{t}{+}u$, the $i$-th coefficient of $\x$, denoted as $x^{(i)}$, corresponds to the transmitted signal on antenna $u$ with DFT index $k{\in}\{0,1,\ldots, K{-}1\}$.
For each pair of antennas $(u,v)$, the discrete samples of the channel have $M{=}BD$ i.i.d. coefficients $h^{(u,v)}[n]$, $n{=}0,1,\ldots,M{-}1$, with $M/K{=}D/T_c{=}\frac{1}{B_cT_c}$.
After applying $K$-point DFT to each discrete channel sequence $h^{(u,v)}[n]$, we define a block-diagonal matrix  
\begin{equation}
 \Hb=\left(\begin{array}{c|c|c|c}
           \Hb[0]&\zero &\cdots&\zero\\\hline
            \zero&\Hb[1]&\ddots&\vdots\\\hline
           \vdots&\ddots&\ddots&\zero\\\hline
           \zero&\dots&\zero&\Hb[K-1]\\
          \end{array}\right),
 \label{eq:MatrixChannel}
\end{equation}
where $\Hb[k]$ contains in its $(v,u)$-th element the $k$-th DFT coefficient of $h^{(u,v)}[n]$, whose distribution is determined by the impulse response $h(t)^{(u,v)}$. Each channel only has $M$ i.i.d. coefficients and any two blocks $\Hb[k]$ and $\Hb[k']$ are correlated only if $|k{-}k'|{<}B_cT_c$. We also define the average gain of the $n$-th channel coefficient $g_{n}^{(u,v)}{=}\Ex{}{|h^{(u,v)}[n]|^2}$ satisfying $\sum_{n=0}^{M-1}g_n^{(u,v)}=1$.

When $D{\ll} T_c$, a cyclic prefix with negligible influence in rate can be inserted to remove the inter-symbol interference and the signal received on each fading realization, $T_c$, depends only on the state of the channel and signal transmitted during the same realization. After applying  $K$-point DFT to the received signal, we can represent the system as
\begin{equation}\label{eq:sysModel}
 \y=\Hb\x+\z,
\end{equation}
where $\y$ is a $N_\mathrm{r}K\times 1$ vector whose $i$-th element $y^{(i)}$, with $i=kN_\mathrm{r}{+}v$, corresponds to the signal received on antenna $v$ with  DFT coefficient index $k$. The noise vector $\z$ follows a Gaussian distribution with PSD $N_0$ ($\mathcal{CN}(0,\I_{N_\mathrm{r}K}N_0T_c)$).

Some references, such as \cite{Zheng2007noncoherent,Ray2007noncoherent}, use a different  discrete-time model with fewer frequency bins, each experiencing an independent fading coefficient that repeats itself for many consecutive symbols. We prove in Appendix~\ref{app:bkSysMod} that the two discrete-time models are compatible. In Appendix~\ref{app:bkSigRepr} we show that the two models are equivalent at the continuous-time level using concepts of multi-carrier modulations and we provide explicit mapping of the channel coefficients between the two models. Therefore our results are independent of the model chosen.

Wideband capacity is related to peakiness in the sense of the normalized fourth moment of the inputs, or \textit{kurtosis} \cite{journals/tit/MedardG02, journals/tit/Verdu02}.  
Given a stochastic sequence $A(t)$, its kurtosis is defined as
\begin{equation}
 \kappa(A(t))\triangleq\frac{\Ex{A(t)}{|a(t)|^4}}{\Ex{A(t)}{|a(t)|^2}^2},
\end{equation}
where the time index $(t)$ may be dropped if the process is stationary.   
By enforcing  a duty cycle $\delta{\in}(0,1]$ on the input signal $\x$, the system 
is converted into the time-alternation of an active stage for a fraction $\delta$ of the time with boosted power $P'{=}\frac{P}{\delta}$, and an idle stage for a fraction $(1{-}\delta)$ of the time. 
Let $\tilde{\x}$ be a non-peaky signal with power $P$ and finite kurtosis $\kappa(\tilde{\x})$. {}{We introduce a binary random variable $c{\in}\{0,1\}$ to represent the use of each fading block of size $ T_c{\times} B_c$,  where $c{=}1$ means the channel block is active for signal transmission and $c{=}0$  means idle, with probability $P_r(c{=}1)=\delta$. We reveal $c$ to the receiver in advance, which will reduce the rate as 
\[C(B,\delta)=\CInf{\x}{\y}{c}=\Inf{\x,c}{\y}-\Inf{c}{\y}= \Inf{\x}{\y} -\Inf{c}{\y} \leq \Inf{\x}{\y},\] 
where $0{\leq} \Inf{c}{\y} {\leq} \Ent{c}$ with all equalities hold for $\delta{=}1$.   The duty cycle} induces a new signal
\begin{equation}
\x {}{= \tilde{\x} \sqrt{\frac{c}{\Ex{}{c}}}}=\begin{cases}
              \tilde{\x}/\sqrt{\delta}, & \mbox{w.p. } \delta,\\
							 0, & \mbox{w.p. } 1{-}\delta,
							\end{cases}
							\mbox{ with }
	\kappa(\x){=}\frac{\Ex{}{|\x|^4}}{\Ex{}{|\x|^2}^2}= \frac{\Ex{}{|\tilde{\x}|^4}}{\delta \Ex{}{|\tilde{\x}|^2}^2} = \frac{\kappa(\tilde{\x})}{\delta}.						
\end{equation} 
Therefore we can effectively adjust the peakiness (in the sense of kurtosis) of signaling without imposing any extra constraint on the distribution of the active signal  $\tilde{x}$.  

\section{Bandwidth Occupancy Limit}\label{sec:Bdlimit}

Our analysis is a generalization of the the SISO analysis with non-peaky signaling in~\cite{journals/twc/LozanoP12}. We extend the process to MIMO systems and to an arbitrary level of signaling peakiness through the tunable duty cycle parameter $\delta\in(0,1]$. Both analyses follow four steps, represented in Fig.~\ref{fig:analisisref}.
\begin{enumerate}
 \item Find a bell-shaped lower bound $R^{LB}(\delta B)\leq C(B,\delta)$;
 \item Determine the unique maximum of $R^{LB}(\delta B)$, $R^{LB}((\delta B)^*)$;
 \item Find a bell-shaped upper bound $R^{UB}(\delta B)\geq C(B,\delta)$;
 \item Determine  $(\delta B)^+$ and $(\delta B)^-$ such that  
$R^{UB}((\delta B)^+){=}R^{UB}((\delta B)^-){=}R^{LB}((\delta B)^*)$.
\end{enumerate}
The result of \cite{journals/twc/LozanoP12} shows that the capacity of a non-coherent fading channel with non-peaky signaling ($\delta{=}1$, finite $\kappa$) grows with bandwidth $B$ only when it is below a \textit{critical bandwidth} $B_{\mathrm{crit}}$, which falls into the range $[B^-,B^+]$. A system operating with insufficient bandwidth $B{<} B_{\mathrm{crit}}$ is less efficient in converting available signal energy into rate due to the sub-linear law between rate and SNR, and the corresponding achievable rate grows with increasing bandwidth.
When signal power spreads over too much bandwidth $B{>}B_{\mathrm{crit}}$, the channel-uncertainty induced penalty grows with increasing bandwidth and the achievable rate decreases to zero as $B{\to}\infty$. Therefore, contrary to the wideband AWGN channel where ``the deeper into the low-SNR
regime, the better'', in the non-coherent fading channel the  guideline is ``enter, but not in excess, in the low-SNR regime'', with the optimal operation point at $B_{\mathrm{crit}}$. Our result shows that for any $B{>}B_{\mathrm{crit}}$ it is possible to bring the capacity back to the {same} optimal value, up to a small approximation error of order $o(1/B_{\mathrm{crit}})$, by imposing a duty-cycle parameter $\delta{=} (\delta B)_{\mathrm{crit}}/B$ and a power-boost $P'=P/\delta$ on the original non-peaky signaling. Moreover, in Sec.~\ref{sec:polynomial} we show that this strategy achieves the same gap from $C^\infty$ as in the peaky-signaling analysis \cite{Zheng2007noncoherent,Ray2007noncoherent}.

\begin{figure}[t]
\centering
\includegraphics[width=.7\columnwidth]{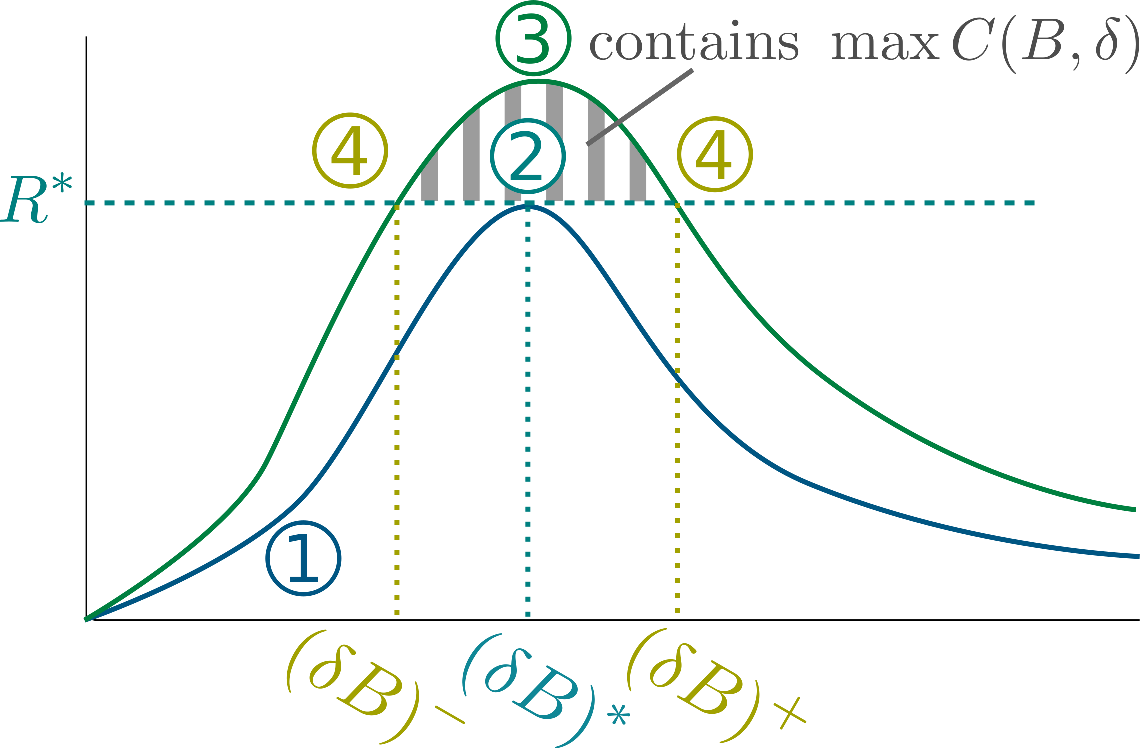}
 \caption{Our four-step analysis of critical bandwidth occupancy. Substituting $(\delta B)$ by $B$ gives the original analysis by \cite{journals/twc/LozanoP12}.}
 \label{fig:analisisref}
\end{figure}


\subsection{Capacity Lower Bound for $C(B,\delta)$}

As in \cite{journals/twc/LozanoP12}, our lower bound is obtained by first calculating the maximum achievable rate of a coherent non-peaky signaling under the average power constraint and then deducting the maximum rate  penalty from lack of CSI. Therefore it is valid for general channel fading distributions and for any value of $B$ and $B_cT_c{>}N_t$. 
The potential spatial correlation among different antennas is not considered here.

\begin{lemma}
\label{lem:RLB}
The achievable rate in a wideband  non-coherent channel with i.i.d. fading and a duty cycle $\delta\in(0,1]$ is lower bounded by
\begin{equation}
\label{eq:RLB}
R^{LB}(\delta B) = \frac{PN_{\mathrm{r}}}{N_0}\left[1-\frac{P(\kappa-2+N_{\mathrm{t}}+N_{\mathrm{r}})}{2\delta BN_{\mathrm{t}}N_0}\right]
  -\frac{\delta BN_{\mathrm{t}}N_{\mathrm{r}}}{B_cT_c}\log\left(1+\frac{P}{\delta BN_{\mathrm{t}}N_0}B_cT_c\right),
\end{equation}
where $\kappa = \kappa(h)$ is the \emph{kurtosis} of the channel fading coefficients.
\end{lemma}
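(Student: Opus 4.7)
The plan is to extend the four-step framework of \cite{journals/twc/LozanoP12} to the MIMO case with duty cycle by writing the duty-cycled transmission as a time-sharing of an active phase (bandwidth $B$, power $P' = P/\delta$, fraction $\delta$ of the time) and a silent phase carrying no information. It therefore suffices to derive a lower bound on the per-use mutual information inside an active block with power $P'$ and rescale by $\delta$ to recover the overall rate. Inside the active block I would apply the chain-rule identity $I(\x;\y) = I(\x;\y\mid\Hb) - I(\x;\Hb\mid\y)$ (valid since $\x\perp\Hb$ in the absence of CSIT), interpret the first term as the coherent mutual information and upper-bound the second term, i.e., the information cost of CSI acquisition, by $I(\y;\Hb\mid\x)$, which admits a log-det expression involving only the channel statistics.

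For the coherent term $I(\x;\y\mid\Hb)$ I would use i.i.d.\ circularly symmetric complex Gaussian inputs with per-antenna per-bin variance $P'/(BN_{\mathrm{t}})$, giving the log-det form $\Ex{}{\log\det\!\bigl(\I + \tfrac{P'}{BN_{\mathrm{t}}N_0}\Hb\Hb^H\bigr)}$. The inequality $\log\det(\I+\A)\geq\tr(\A)-\tfrac12\tr(\A^2)$ (valid for any PSD $\A$) then produces a linear term that, summed over the $K = BT_c$ DFT bins and normalized per unit time, equals $P'N_{\mathrm{r}}/N_0$, together with a quadratic correction proportional to $\Ex{}{\tr((\Hb\Hb^H)^2)}$. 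Expanding this second trace per block and performing the Wick-type count over antenna-index pairs $(a,b,c,d)$ yields $N_{\mathrm{t}}N_{\mathrm{r}}(\kappa-2+N_{\mathrm{t}}+N_{\mathrm{r}})$: the kurtosis $\kappa$ comes from the coincident case $a=c,\,b=d$, while $-2+N_{\mathrm{t}}+N_{\mathrm{r}}$ arises from the remaining same-row and same-column MIMO pairings. After rescaling by $\delta$, this reproduces the first bracketed term of \eqref{eq:RLB}.

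For the penalty $I(\y;\Hb\mid\x)$, note that given $\x$ of power $P'$, the observation $\y$ is a noisy linear measurement of the $N_{\mathrm{t}}N_{\mathrm{r}}M = N_{\mathrm{t}}N_{\mathrm{r}}B/B_c$ independent channel coefficients per coherence block. A worst-case Gaussian bound on $I(\y;\Hb\mid\x)$ then gives at most $\log(1+\mathrm{SNR}_{\mathrm{est}})$ nats per coefficient, with $\mathrm{SNR}_{\mathrm{est}} = \frac{P'B_cT_c}{BN_{\mathrm{t}}N_0} = \frac{PB_cT_c}{\delta BN_{\mathrm{t}}N_0}$ capturing the total active-block signal energy available to resolve each coefficient in units of $N_0$. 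Multiplying by the number of coefficients, normalizing by $T_c$ and weighting by $\delta$ produces exactly the subtracted term in \eqref{eq:RLB}, and adding the two contributions gives $R^{LB}(\delta B)$.

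The main obstacle I anticipate is the MIMO bookkeeping of the quadratic trace: collapsing $\Ex{}{\tr((\Hb\Hb^H)^2)}$ into the single compact constant $(\kappa-2+N_{\mathrm{t}}+N_{\mathrm{r}})$ requires a clean aggregate definition of $\kappa$ over antenna pairs and delay taps, and the cross-pair expectations must be separated cleanly from the coincident-index fourth moment so that the MIMO dimensions are not accidentally absorbed into the kurtosis. A secondary concern is matching the estimation-error variance used inside the coherent bound with the estimator implicit in the penalty, so that no residual cross-terms survive between the two pieces. Granted this accounting, the rest of the argument is substitution of $P' = P/\delta$ and the per-degree-of-freedom SNR into the Taylor and penalty bounds and collecting terms.
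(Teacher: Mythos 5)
Your proposal is correct and follows essentially the same route as the paper's proof: the chain-rule decomposition (your variant $I(X;Y)=I(X;Y|H)-I(X;H|Y)$ with $I(X;H|Y)\leq I(H;Y|X)$ lands on the same bound $I(X;Y)\geq I(X;Y|H)-I(H;Y|X)$ the paper uses), the Gaussian-input log-det with $\log\det(\I+\A)\geq\tr(\A)-\tfrac12\tr(\A^2)$ and the fourth-moment antenna bookkeeping giving $\kappa-2+N_{\mathrm{t}}+N_{\mathrm{r}}$, and a max-entropy Gaussian/pilot upper bound on the estimation penalty with per-coefficient SNR $\frac{PB_cT_c}{\delta BN_{\mathrm{t}}N_0}$. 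The only place the paper is more explicit is in distributing the estimation energy evenly over the $MN_{\mathrm{t}}$ taps (an AM--GM step on the log-det followed by Jensen's inequality over the random input and the bound $\sum_m g_m\leq N_{\mathrm{t}}$), which is exactly the "per coefficient" accounting you sketch.
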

\begin{proof}
See Appendix \ref{app:Prooflem1}.
\end{proof}

The \emph{kurtosis} $\kappa$ for many fading distributions are in the range of $[1,2]$. For example, as given in~\cite{journals/twc/LozanoP12}, $\kappa=2$ for Rayleigh fading, $\kappa=2{-}4k^2/(1{+}2k)^2$ for Rice fading with factor $k>0$, and $\kappa=1{+}1/m$ for Nakagami-$m$ fading channels.

\begin{remark}
Even though the duty-cycle constrained capacity $C(B,\delta)$ might be two-dimensional function of $\delta$ and $B$ , the lower bound is only a function of the product $\delta B$. 
\end{remark}

\subsection{Maximum of $R^{LB}$}

We use the assumption that $B_cT_c{\gg}1$ to determine the maximum of the capacity lower bound $R^{LB}$. For any finite $B_cT_c$ we can  approximate the optimal $(\delta B)^*$ and the associated maximum rate up to an error term $o\left(\sqrt{\frac{\log (B_cT_c)}{B_cT_c}}\right)$ that decreases to zero as $B_cT_c\to\infty$.

\begin{lemma}
\label{lem:optimB}
$R^{LB}(\delta B)$ is maximized at $\delta B{=}(\delta B)^*$ with
\begin{align}
 (\delta B)^* & = \frac{P}{N_0N_{\mathrm{t}}}\sqrt{\frac{B_cT_c}{\log (B_cT_c)}(\kappa-2+N_{\mathrm{t}}+N_{\mathrm{r}})}+o\left(\sqrt{\frac{B_cT_c}{\log (B_cT_c)}}\right), \label{eq:optimB}\\
 R^{LB}((\delta B)^*) & \geq \frac{PN_{\mathrm{r}}}{N_0}\left[1-\sqrt{\frac{1+\log (B_cT_c)}{B_cT_c}(\kappa{-}2{+}N_{\mathrm{t}}{+}N_{\mathrm{r}})\log\pi}\right]-o\left(\sqrt{\frac{\log (B_cT_c)}{B_cT_c}}\right).\label{eq:optimR}
\end{align}
\end{lemma}
\begin{proof}
See Appendix \ref{app:Prooflem2}.
\end{proof}

\begin{remark}
We would like to emphasize that the rate-maximizing bandwidth occupancy $(\delta B)^*$ is very large given the fact that the channel coherence $B_cT_c$ usually ranges from a few hundreds to hundreds of thousands. For example, assuming $2\times 2$ MIMO over Rayleigh fading ($\kappa{=}2$) with $P/N_0{=}70$ dB, we have $(\delta B)^*{\simeq} 120$ MHz with capacity gap $\Delta/C^\infty{<}0.18$ for $B_cT_c{=}10^3$, and $(\delta B)^*{\simeq} 930$ MHz with $\Delta/C^\infty{<}0.03$ for $B_cT_c{=}10^5$.
\end{remark}

\subsection{Capacity Upper Bound for $C(B,\delta)$}

We obtain a capacity upper bound for the case when channel is Rayleigh distributed. The bound, up to an  error term of $o(1/\delta B)$ that vanishes as $\delta B\to\infty$, applies to any value of $B$ and $B_cT_c{>}N_t$ and  all inputs subject to constraints of  average power $P$ and signaling duty cycle $\delta$.

\begin{lemma}
\label{lem:RUB}
The achievable rate of  signaling schemes with duty cycle $\delta{\in}(0,1]$ in a wideband non-coherent Rayleigh fading channel is upper bounded by
\begin{align}\label{eq:RUB}
  R^{UB}(\delta B)=\frac{PN_{\mathrm{r}}}{N_0}\Bigg[1-\frac{P}{2\delta BN_0}
  -\frac{\delta BN_{\mathrm{t}}N_0}{PB_cT_c}\Ex{\psi}{\log(1{+}\frac{P}{\delta N_{\mathrm{t}} BN_0}B_cT_c g_{\min} \psi)}\Bigg]+o(\frac{1}{\delta B}),
\end{align}
where $g_{\min}{=}\min_{m,u,v} \Ex{}{|h^{(u,v)}[m]|^2}$ is the minimum non-zero square channel gain among all delays and antenna pairs, 
and the random variable $\psi$   is defined as 
\begin{equation}
\label{eq:psidef}
\psi{=}\arg\min_{\psi_{K,n}} \Ex{}{\log\left(1{+}\frac{P g_{\min}\psi_{K,n}}{\delta N_{\mathrm{t}} N_0B}B_cT_c\right)},\
\mbox{ where }
\textstyle{}\psi_{K,n}{\triangleq}\frac{1}{K}|\sum_{k=0}^{K-1} \frac{\x_k}{\sqrt{P}}e^{-j2\pi \frac{kn}{MN_{\mathrm{t}}}}|^2.
\end{equation}
\end{lemma}
\begin{proof}
See Appendix \ref{app:Prooflem3}.
\end{proof}

\begin{remark}
 The auxiliary variable $\psi$ is bounded ($\psi{>}0$, $\Ex{}{\psi}{\leq}1$) and serves here as a placeholder for the minimization of the last term of the bound, which is implicitly determined by \eqref{eq:psidef}.
\end{remark}

\subsection{Critical Bandwidth Occupancy}
We obtain the range of values of $\delta B$ where the upper bound is larger than $R^{LB}((\delta B)^*)$. $C(B,\delta)$ can approach $C^\infty$ within the small gap in \eqref{eq:optimR}  only if the bandwidth occupancy is contained in an interval that grows linearly with $\sqrt{\frac{B_cT_c}{\log (B_cT_c)}}$, as suggested by \eqref{eq:optimB}, and the error term $o(\frac{1}{\delta B})$ in Lemma \ref{lem:RUB} can be substituted with an equivalent term $o\big(\sqrt{\frac{\log (B_cT_c)}{B_cT_c}}\big)$.
\begin{lemma}
\label{lem:critB}
In a wideband  non-coherent Rayleigh fading channel, the maximum rate in \eqref{eq:optimR} is achievable at a critical bandwidth occupancy $(\delta B)_\mathrm{crit}$ that resides in the range
\begin{equation}
 (\delta B)^- \leq (\delta B)_\mathrm{crit}\leq (\delta B)^+,
\end{equation}
where
\begin{equation}\label{eq:dBcritical}
\begin{split}
(\delta B)^-  &{=}\frac{P}{N_0}\frac{1}{ 2\sqrt{(N_{\mathrm{t}}+N_{\mathrm{r}})\log\pi}}\sqrt{\frac{B_cT_c}{\log (B_cT_c)}}+o\left(\sqrt{\frac{B_cT_c}{\log (B_cT_c)}}\right),\\
(\delta B)^+ &{=}\frac{P}{N_0}2\sqrt{\frac{(N_{\mathrm{t}}+N_{\mathrm{r}})}{N_{\mathrm{t}}^2}\log\pi}\sqrt{\frac{B_cT_c}{\log (B_cT_c)}}+o\left(\sqrt{\frac{B_cT_c}{\log (B_cT_c)}}\right).
\end{split}
 \end{equation}
\end{lemma}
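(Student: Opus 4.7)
The plan is to execute steps 3 and 4 of the four-step framework from the introduction to Section~\ref{sec:Bdlimit}, using Lemmas~\ref{lem:optimB} and~\ref{lem:RUB} as inputs. Since $R^{LB}((\delta B)^*)$ is achievable for any fading channel, including Rayleigh, the maximum rate $C$ at the critical occupancy satisfies $C((\delta B)_\mathrm{crit}) \geq R^{LB}((\delta B)^*)$, while $C(\delta B) \leq R^{UB}(\delta B)$ from \eqref{eq:RUB}. Hence
\[
(\delta B)_\mathrm{crit} \in \{x > 0 : R^{UB}(x) \geq R^{LB}((\delta B)^*)\}.
\]
First I would verify that $R^{UB}$ is bell-shaped in $\delta B$: the term $P/(2 \delta B N_0)$ decreases in $\delta B$, while a short calculation shows that the map $u \mapsto u\log(1+c/u)$ is strictly increasing for any $c>0$, so the channel-uncertainty penalty is monotone increasing in $\delta B$. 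Consequently the level set above is a single closed interval $[(\delta B)^-, (\delta B)^+]$ containing $(\delta B)_\mathrm{crit}$, and the task reduces to bounding its endpoints from the outside.

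For the lower endpoint I would drop the third (channel-uncertainty) term in \eqref{eq:RUB}, which is non-negative and subtracted, so the resulting $\tilde R^{UB}(\delta B) \geq R^{UB}(\delta B)$ and its level set strictly contains the true one. Setting $\tilde R^{UB}(\delta B) = R^{LB}((\delta B)^*)$ becomes linear in $(\delta B)^{-1}$; inserting Rayleigh $\kappa = 2$ into \eqref{eq:optimR} and solving yields exactly the stated expression for $(\delta B)^-$ in \eqref{eq:dBcritical}.

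For the upper endpoint I would instead drop the second term of $R^{UB}$ and keep only the channel-uncertainty penalty, which dominates at large $\delta B$. The expectation $\Ex{H}{\log(1 + P B_c T_c g_{\min} \psi / (\delta N_\mathrm{t} B N_0))}$ must then be lower bounded, so that dropping the second term again only widens the level set. In the regime near the target $(\delta B)^+$, the argument of the log is of order $\sqrt{B_c T_c \log(B_c T_c)} \gg 1$, so a bound of the type $\log(1+x) \geq c\log(x)$ for $x \geq 1$ is natural, combined with tail controls on $g_{\min}$ (the minimum over $M N_\mathrm{t} N_\mathrm{r}$ Rayleigh coefficients) and on the minimizing eigenvalue $\psi$ of $\Xi\Xi^{\dag}$. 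After these reductions the equation acquires the transcendental form $(\delta B)\log(B_c T_c) \asymp (P/N_0)\sqrt{B_c T_c/\log(B_c T_c)}$, whose solution matches $(\delta B)^+$ up to the explicit constants in \eqref{eq:dBcritical}.

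The hardest step is the last one: obtaining a lower bound on $\Ex{H}{\log(1 + \cdot g_{\min} \psi)}$ that preserves the $\sqrt{B_c T_c/\log(B_c T_c)}$ scaling in closed form. Loose bounds on either $g_{\min}$ or $\psi$ would destroy the logarithmic factor and alter the asymptotic exponent of $(\delta B)^+$, while the tightest bounds do not admit a clean algebraic form. Once suitable concentration estimates for $g_{\min}$ and for the minimizing $\psi$ are in place, the remainder of the argument is routine bookkeeping to line up the constants $N_\mathrm{t}, N_\mathrm{r}, \log \pi$ with \eqref{eq:dBcritical}.
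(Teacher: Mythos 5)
Your overall framework---bracketing $(\delta B)_{\mathrm{crit}}$ by the level set $\{x: R^{UB}(x)\geq R^{LB}((\delta B)^*)\}$---is exactly step 4 of the paper, and your derivation of $(\delta B)^-$ is complete and lands on the stated constant: dropping the nonnegative penalty term and solving $\frac{P}{2\delta BN_0}=\sqrt{\tfrac{\log(B_cT_c)}{B_cT_c}(N_\mathrm{t}+N_\mathrm{r})\log\pi}$ gives \eqref{eq:dBcritical} directly. The paper reaches the same numbers by a slightly different algebraic route: it substitutes the ansatz $\frac{P}{(\delta B)N_0}=\sqrt{\Omega\log(B_cT_c)/B_cT_c}$ into \eqref{eq:RUB}, reduces the equality $R^{UB}=R^{LB}((\delta B)^*)$ to the single quadratic $\tfrac12\left(\sqrt{\Omega}+N_\mathrm{t}/\sqrt{\Omega}\right)=\sqrt{(N_\mathrm{t}+N_\mathrm{r})\log\pi}$, and then bounds its two roots; your one-term-at-a-time simplification reproduces precisely those root bounds, so the constants are consistent.

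The genuine gap is at $(\delta B)^+$. There you must show the penalty is \emph{large}, which requires a lower bound of the form $\Ex{H}{\log(1+\tfrac{PB_cT_c}{\delta BN_\mathrm{t}N_0}g_{\min}\psi)}\gtrsim\tfrac12\log(B_cT_c)$, and you defer this to unspecified concentration estimates for $g_{\min}$ and $\psi$. That route is harder than necessary and risky: $g_{\min}$ is a minimum over $MN_\mathrm{t}N_\mathrm{r}$ coefficients with $M=BD\to\infty$ in the regime of interest, so naive tail bounds on $\log g_{\min}$ diverge like $-\log M$ and would wreck the constant. The paper uses no probabilistic estimate at all: it factors $\sqrt{B_cT_c}$ out of the argument of the logarithm so that the deterministic piece $\tfrac12\log(B_cT_c)$ appears exactly, and the residual $\Ex{H}{\log(\tfrac{1}{\sqrt{B_cT_c}}+\sqrt{\Omega\log(B_cT_c)}\,g_{\min}\psi/N_\mathrm{t})}$, after division by $\sqrt{\Omega B_cT_c\log(B_cT_c)}$, is absorbed into the $o(\sqrt{\log(B_cT_c)/B_cT_c})$ error already carried in \eqref{eq:RUB}. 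Without this (or an equivalent) device your plan does not close. Two smaller points: your displayed asymptotic equation $(\delta B)\log(B_cT_c)\asymp(P/N_0)\sqrt{B_cT_c/\log(B_cT_c)}$ has the wrong power of $\log(B_cT_c)$ (its solution scales as $\sqrt{B_cT_c}/(\log B_cT_c)^{3/2}$, not as $(\delta B)^+$); and monotonicity of the two terms of $R^{UB}$ in opposite directions does not by itself make the level set a single interval, since a sum of an increasing and a decreasing function need not be quasiconcave.
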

\begin{proof}
See Appendix \ref{app:Prooflem4}.
\end{proof}

\subsection{Interpretation of the Result}

Our upper and lower bounds on $C(B,\delta)$ are all derived from the chain rule  
$${\CInf{\x}{\y}{c}} =  {}{\delta}\CInf{\x,\Hb}{\y}{{}{c{=}1}} - {}{\delta}\CInf{\Hb}{\y}{\x{}{,c{=}1}},$$ 
where the first term  corresponds to the data transmission setup
that quantifies the information about $\Hb\x$ contained in $\y$ and the second term can be interpreted as a ``channel estimation'' setup that quantifies the rate penalty for not knowing $\Hb$. Both terms grow as $\delta B$ increases but the first term grows faster when $\delta B$ is small, thus increases ${}{\CInf{\x}{\y}{c}}$, until the second term ``accelerates''. Beyond the critical point $(\delta B)_{\mathrm{crit}}$, the second term grows faster than the first, thus erodes ${}{\CInf{\x}{\y}{c}}$, until the capacity drops to zero when $\delta B\to\infty$. 
This behavior is illustrated in Fig. \ref{fig:analisisref} for both the capacity upper and lower bounds.
The coherence time $T_c$ and coherence bandwidth $B_c$ of the channel jointly determine the relative speeds of this ``race'' through their product. 
The factor $T_cB_c$ appears in rate penalty both as the denominator outside the logarithm (there are $T_cB_c$ times	fewer i.i.d. channel realizations than signal realizations) and as a multiplier of the SNR inside the logarithm (the power of  $T_cB_c$ signal realizations can be combined to estimate each channel realization), {}{leading to a capacity gap} depending on $\frac{\log B_cT_c}{B_cT_c}$.

\begin{figure}[!t]
 \centering\subfigure[Upper bound for $(\delta,B)$ and the lower bound for $\delta{=}1$.]{
 \includegraphics[width=.65\columnwidth]{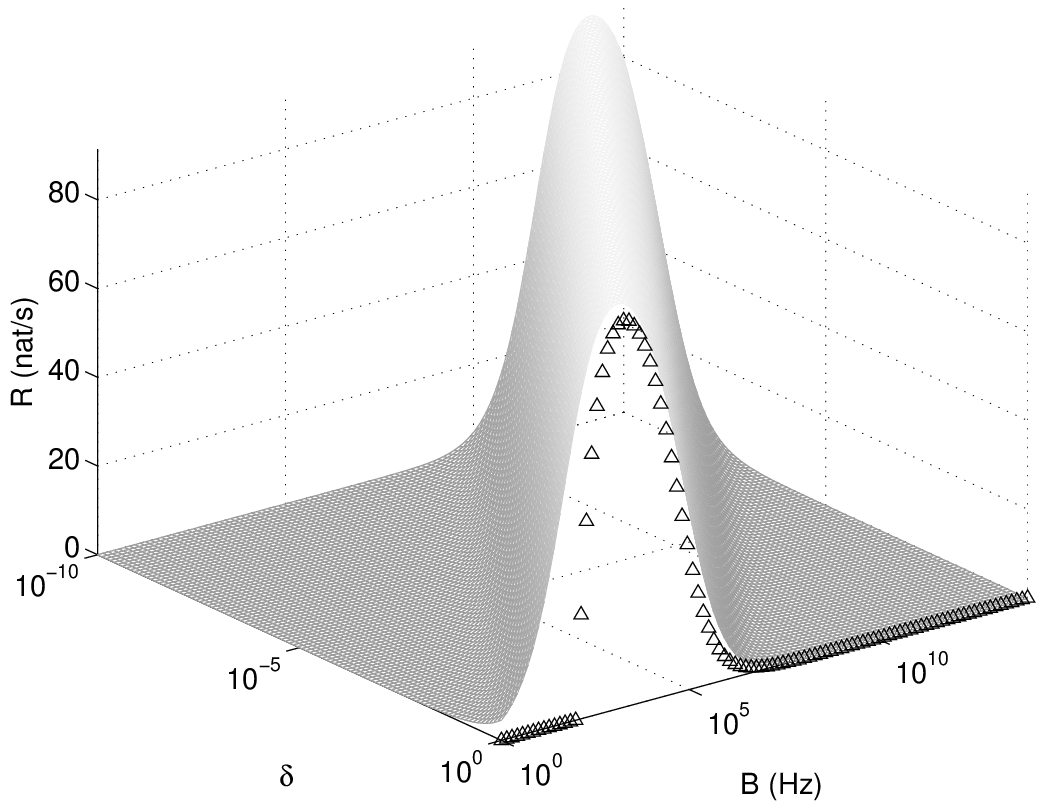}
 \label{fig:3DCapInterpretation}
 }
 \centering\subfigure[Contour plot  and levels of bandwidth occupancy.]{
 \includegraphics[width=.65\columnwidth]{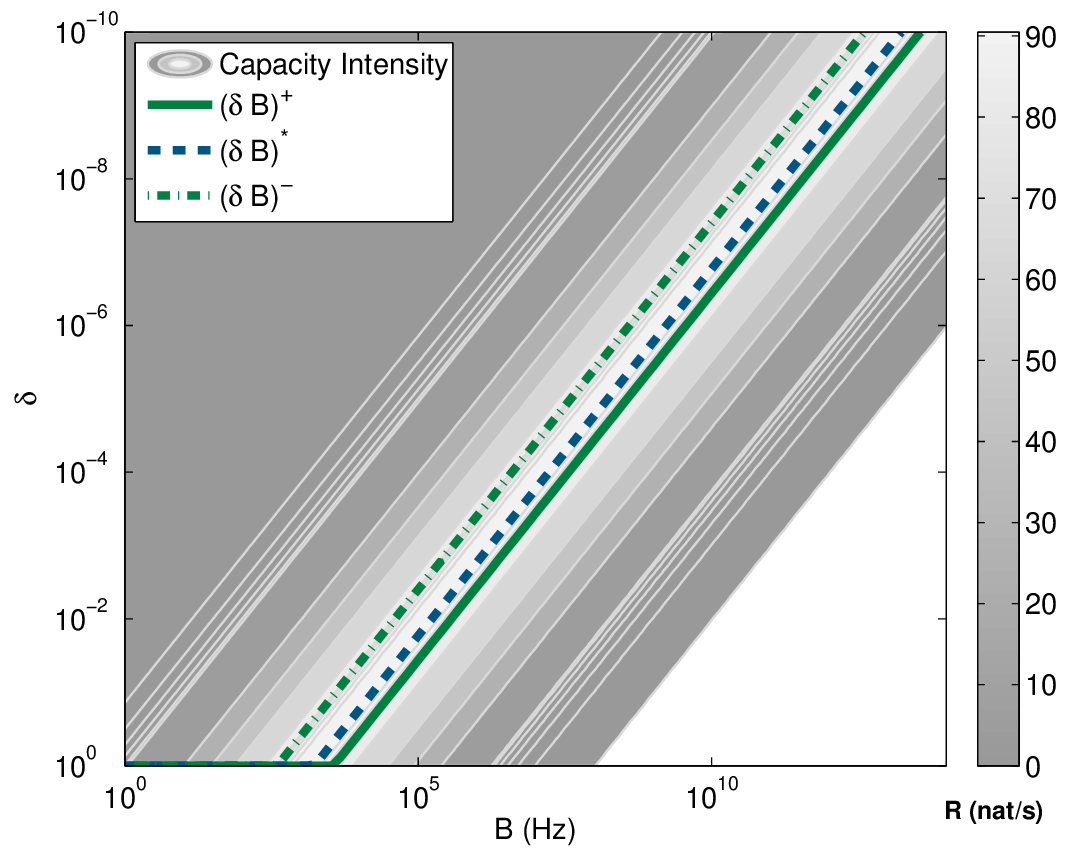}
 \label{fig:2DCapInterpretation}
 }
 \caption{Capacity upper bound over the plane $(\delta, B)$ and the low bound for $\delta{=}1$, with $B_cT_c=10^3$, $N_{\mathrm{t}}{=}N_{\mathrm{r}}{=}1$, and an intentionally chosen small value $P/N_0{=}20$dB.
 Range of critical bandwidth occupancy is also shown.}
  \label{fig:CapInterpretation}
\end{figure}

In Fig.~\ref{fig:3DCapInterpretation} we represent the upper bound to capacity as a field over the 2D plane $(\delta, B)$ with $B_cT_c{=}10^3$ and $P/N_0{=}20$ dB. In the vertical cut for $\delta{=}1$ we  also plot the lower bound using triangular bullets. Note that we intentionally choose a smaller value of $P/N_0$ to illustrate the details of the transition phase in capacity, which would otherwise be difficult to observe with typical values of $P/N_0{\sim}70$ dB~\cite{journals/twc/LozanoP12}. On the $B$-axis, we can see that for fixed values of $\delta$ the capacity as a function of bandwidth is bell-shaped, grows at small bandwidth, reaches a maximum and then decreases to zero.
Fig.~\ref{fig:2DCapInterpretation} provides a better perspective on the value of capacity upper bounds as a function of the bandwidth occupancy $\delta B$, where the optimal $(\delta B)^*$ that maximizes the capacity lower bound $R^{LB}$ and the range $[(\delta B)^-, (\delta B)^+]$ for the critical bandwidth occupancy $(\delta B)_{\mathrm{crit}}$ are also plotted. 
For different level of peakiness $\delta$, the peak values of capacity are the same but appear at different bandwidth $B$, and in fact all points with identical value $\delta B$ have the same lower/upper bounds.
Our analysis recovers previous results for non-peaky signals by setting $\delta{=}1$, producing a finite critical bandwidth. It also recovers the capacity with infinite-fourth-moment signals by taking $\delta{\rightarrow}0$, which drives the critical bandwidth occupancy point further into higher bandwidths satisfying $\displaystyle \lim_{\delta\to0}\frac{(\delta B)_{\mathrm{crit}}}{\delta}=\infty$.

Our analysis also unveils the impact of the dimensions of the MIMO array on the non-coherent wideband fading channels. The maximum rate  in Lemma~\ref{lem:optimB}, derived under the condition that $N_t{<}B_cT_c$, depends critically on channel coherence $B_cT_c$.. For example, for $B_cT_c=10^3$ and $B_cT_c=10^6$, the maximum rate \eqref{eq:optimR} can be approximated  as, respectively,
 \[R(B_cT_c{=}10^3)\simeq \frac{PN_r}{N_0}(1-0.1\sqrt{N_t{+}N_r}), \ R(B_cT_c{=}10^6)\simeq \frac{PN_r}{N_0}(1-0.005\sqrt{N_t{+}N_r}).\]
When $N_r{>}1$ is fixed, increasing the number of transmit antennas will degrade the rate, with the gap growing linearly with $\sqrt{N_t{+}N_r}$. When $N_t{>}1$ is fixed and channel is relatively flat (hence $B_cT_c$ is large), the rate gap is negligible for typical MIMO setups and therefore the rate grows almost linearly with $N_r$. When the channel is rather dispersive (hence  $B_cT_c$ is small), however, increasing $N_r$ will produce a power gain that increases the rate at speed $N_r$ but at the same time will bring in more channel uncertainty that increase the penalty at rate proportional to $\sqrt{N_t{+}N_r}$. Therefore using too many receive antennas will hurt the achievable rate. For example, the maximum rate peaks around $N_r{=}40$ for $B_cT_c=10^3$. 
It must be noted that our analysis is accurate for conventional MIMO systems with $N_t<B_cT_c$, and extension to high-dimensional MIMO is out of the scope of the current paper.

\section{Unified Capacity for Peaky and non-Peaky Signals}\label{sec:polynomial} 

In this section we will show that the peak rate $R^{LB}((\delta B)^*)$ in~\eqref{eq:optimR}, which is derived  by combining non-peaky signaling analysis~\cite{journals/twc/LozanoP12} and tunable peakiness through duty cycle $\delta{\in}(0,1]$, approaches  $C^\infty$ within the same gap as in the unconstrained capacity $C(B)$ analysis using a generalized polynomial rate approximation \eqref{eqn:Ray} obtained via peaky signaling analysis~\cite{Zheng2007noncoherent,Ray2007noncoherent}. 
We first replace $\delta$, a free parameter in our model representing the duty cycle, with an assigned value $\delta{=}\mathrm{SNR}^{1-\alpha}$ as in \cite{Ray2007noncoherent},  where $\mathrm{SNR}\triangleq P/(BN_0)$ and $\alpha$ is the exponent that determines the wideband slope. This substitution will show that, when the channel coherence length is large, i.e., $B_cT_c\gg1$, the gaps to $C^\infty$ in \eqref{eq:optimR}  
and in \eqref{eqn:Ray} have the same value at points $(B,\delta=\mathrm{SNR}^{1-\alpha})$. Furthermore, we show that the sufficient and necessary conditions on the coherence length $B_cT_c$ to approach $C^\infty$, proved in \cite[Th. 1-Th. 3]{Ray2007noncoherent}, can also be established using our results. Once we have established that the results are equivalent, the opposite path can be taken and use the values of $\alpha$ obtained in \cite{Ray2007noncoherent} to calculate a new range of the critical bandwidth occupancy in closed-form expressions. We discuss the relationship of the two expressions, which have minor differences in the error terms of the calculation of $\alpha$, reveal a trade-off between accuracy and resolution in \cite{Ray2007noncoherent}, and demonstrate that the two methods represent the same optimal rate.

\subsection{Different Analyses Show the Same Results}

The analysis in \cite{Ray2007noncoherent} obtains a necessary and sufficient condition on the coherence length of the channel, $B_cT_c$,
to guarantee that capacity is above a polynomial of  $\mathrm{SNR}{=}\frac{P}{N_0B}$ as $B{\to}\infty$ with specified peakyness $\delta=\mathrm{SNR}^{\alpha-1}$. This result is given in \cite[Th. 3]{Ray2007noncoherent}, which is rewritten in the next lemma for easy reference.  The result is valid for arbitrary $B_cT_c$, but the necessary condition to approximate $C^\infty$  is akin to requiring that $B_cT_c$ be large.
\begin{lemma}[Th. 3 \cite{Ray2007noncoherent}]
\label{lem:ray}
For any $\alpha{\in}(0,1]$ and $\epsilon{\in}(0,\alpha)$ the capacity of a Rayleigh block-fading MIMO channel with coherence time $T_c$, coherence bandwidth $B_c$, and average signal to noise ratio $\mathrm{SNR}{=}\frac{P}{BN_0}$ is
\begin{equation}\label{eqn:ray}
  \frac{C(B)}{B}\geq N_\mathrm{r}\mathrm{SNR}-\frac{N_\mathrm{r}(N_\mathrm{r}+N_\mathrm{t})}{2N_\mathrm{t}}\mathrm{SNR}^{1+\alpha}+\Theta(\mathrm{SNR}^{1+\alpha+\epsilon}),
\end{equation}
if and only if there exists a $\sigma{\in}(0,\epsilon)$ such that
\begin{equation}\label{eqn:llc}
 B_cT_c=\frac{N_{\mathrm{t}}^2}{(N_\mathrm{r}+N_\mathrm{t})^2}\mathrm{SNR}^{-2(\sigma+\alpha)}.
\end{equation}
\end{lemma}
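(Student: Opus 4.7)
The plan is to derive Lemma~\ref{lem:ray} from Lemma~\ref{lem:RLB} with the peakiness parameter chosen as $\delta{=}\mathrm{SNR}^{1-\alpha}$, exactly as anticipated in the introduction, together with Lemma~\ref{lem:RUB} for the converse direction. This makes explicit that the polynomial-in-$\mathrm{SNR}$ characterisation of Ray and Zheng and the critical-bandwidth-occupancy description of Section~\ref{sec:Bdlimit} describe the same threshold in different variables. Specialising Lemma~\ref{lem:RLB} to Rayleigh fading ($\kappa{=}2$) and writing it in the $(\mathrm{SNR},L_c)$ variables with $L_c{=}B_cT_c$ gives
\[
\frac{R^{LB}(\delta B)}{B} = N_\mathrm{r}\mathrm{SNR} - \frac{N_\mathrm{r}(N_\mathrm{r}{+}N_\mathrm{t})}{2\delta N_\mathrm{t}}\mathrm{SNR}^2 - \frac{\delta N_\mathrm{t}N_\mathrm{r}}{L_c}\log\!\left(1{+}\frac{L_c\mathrm{SNR}}{\delta N_\mathrm{t}}\right),
\]
so the first two terms already match the structure of~\eqref{eqn:ray} once $\delta$ is chosen correctly, and only the channel-uncertainty penalty depends on $L_c$.

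For the sufficiency direction ($\Leftarrow$), substituting $\delta{=}\mathrm{SNR}^{1-\alpha}$ into the second term yields $\frac{\mathrm{SNR}^2}{\delta}{=}\mathrm{SNR}^{1+\alpha}$ and hence exactly $-\frac{N_\mathrm{r}(N_\mathrm{r}{+}N_\mathrm{t})}{2N_\mathrm{t}}\mathrm{SNR}^{1+\alpha}$, already with the claimed leading constant. It remains to show that under the hypothesised $L_c{=}\frac{N_\mathrm{t}^2}{(N_\mathrm{r}+N_\mathrm{t})^2}\mathrm{SNR}^{-2(\sigma+\alpha)}$ the third term is $O(\mathrm{SNR}^{1+\alpha+\epsilon})$. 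Plugging in the two scalings gives $L_c\mathrm{SNR}/(\delta N_\mathrm{t})\sim\mathrm{SNR}^{-\alpha-2\sigma}\to\infty$, so $\log(1{+}L_c\mathrm{SNR}/(\delta N_\mathrm{t}))\sim(\alpha{+}2\sigma)\log(1/\mathrm{SNR})$ and the prefactor $\frac{\delta N_\mathrm{t}N_\mathrm{r}}{L_c}$ is $\Theta(\mathrm{SNR}^{1+\alpha+2\sigma})$; the third term is therefore $\Theta(\mathrm{SNR}^{1+\alpha+2\sigma}\log(1/\mathrm{SNR}))$. Choosing $\sigma\in(\epsilon/2,\epsilon)$ ensures $\mathrm{SNR}^{2\sigma-\epsilon}\log(1/\mathrm{SNR})\to 0$, so this is $o(\mathrm{SNR}^{1+\alpha+\epsilon})$, as required.

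For the necessity direction ($\Rightarrow$) I would apply the same asymptotic expansion to the upper bound of Lemma~\ref{lem:RUB}, whose dependence on $L_c$ is structurally identical. Any $L_c$ scaling more slowly in $1/\mathrm{SNR}$ than the stated expression would push the penalty above $\mathrm{SNR}^{1+\alpha}$ in leading order, and any faster scaling would shrink the leading constant below $\frac{N_\mathrm{r}(N_\mathrm{r}{+}N_\mathrm{t})}{2N_\mathrm{t}}$; contraposition then forces $L_c$ into the claimed polynomial form up to the slack $\sigma\in(0,\epsilon)$. The allowed range of $\sigma$ is exactly the interval in which the residual-absorption argument of the sufficiency step reverses.

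The main obstacle will be the careful bookkeeping aligning the peakiness choice $\delta{=}\mathrm{SNR}^{1-\alpha}$ with the $L_c$ scaling so that both the exponent and the coefficient of the quadratic-in-$\mathrm{SNR}$ correction land on the stated values; in particular the $\mathrm{SNR}^{2\sigma}$ polynomial slack must strictly dominate the $\log(1/\mathrm{SNR})$ factor, which requires verifying that every $\epsilon\in(0,\alpha)$ admits some $\sigma$ in the open sub-interval $(\epsilon/2,\epsilon)$. A secondary complication is that Lemma~\ref{lem:RUB} controls capacity through the random quantity $g_{\min}\psi$, so a preliminary concentration argument on this worst-case product is needed to ensure that its fluctuations are absorbed into $\Theta(\mathrm{SNR}^{1+\alpha+\epsilon})$ rather than corrupting the leading-order coefficient in the converse step.
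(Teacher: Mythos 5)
First, a point of comparison: the paper does not prove Lemma~\ref{lem:ray} at all --- it is quoted verbatim from \cite[Th.~3]{Ray2007noncoherent} ``for easy reference,'' and the paper's own contribution is the pair of Propositions~\ref{pro:RayTh2} and~\ref{pro:RayTh1}, which show that Corollary~\ref{cor:altRayTh2} and Corollary~\ref{cor:altRayTh1} \emph{recover} the sufficient and necessary conditions of \cite[Th.~2]{Ray2007noncoherent} and \cite[Th.~1]{Ray2007noncoherent}. Your attempt to prove the lemma outright from Lemmas~\ref{lem:RLB} and~\ref{lem:RUB} is therefore more ambitious than what the paper does, and as written it has two genuine gaps. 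The first is a quantifier inversion in the sufficiency direction: in the ``if'' direction $\sigma$ is supplied by the hypothesis and may lie anywhere in $(0,\epsilon)$, yet your absorption of the channel-uncertainty term works only for $\sigma\in(\epsilon/2,\epsilon)$. Your own computation gives a third term of order $\mathrm{SNR}^{1+\alpha+2\sigma}\log(1/\mathrm{SNR})$, which for $\sigma\leq\epsilon/2$ \emph{dominates} $\mathrm{SNR}^{1+\alpha+\epsilon}$, so $R^{LB}$ with $\delta=\mathrm{SNR}^{1-\alpha}$ fails to establish \eqref{eqn:ray} for part of the hypothesized range; you cannot ``choose'' $\sigma$, since it is pinned down by $L_c$ through \eqref{eqn:llc}. (The paper dodges exactly this issue by running the argument in the opposite direction: Proposition~\ref{pro:RayTh2} absorbs the stray $(N_{\mathrm{r}}+N_{\mathrm{t}})\log(B_cT_c)$ factor into the $\mathrm{SNR}^{-2\epsilon}$ slack of the \emph{coherence-length condition}, not into the error term of the rate expansion.)

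The second gap is that the necessity direction is only gestured at. The ``only if'' claim must exclude \emph{every} signaling scheme --- every peakiness level and input law --- from achieving \eqref{eqn:ray} when $L_c$ does not scale as \eqref{eqn:llc}; the identification $\delta=\mathrm{SNR}^{1-\alpha}$ is a parametrization of one family, not a constraint on the optimizer. Moreover the right-hand side of Lemma~\ref{lem:RUB} contains the scheme-dependent random quantities $g_{\min}$ and $\psi$, and converting that expression into a two-sided bracket on the admissible bandwidth occupancy is precisely the content of Lemma~\ref{lem:critB} and Corollary~\ref{cor:altRayTh1}; deferring this to ``a preliminary concentration argument'' leaves the converse unestablished. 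Note also that \eqref{eqn:ray} is a one-sided inequality, so your remark that ``any faster scaling would shrink the leading constant'' does not by itself produce a contradiction --- the necessity only bites if the $\Theta(\mathrm{SNR}^{1+\alpha+\epsilon})$ remainder is read as a two-sided order estimate, which your sketch never makes explicit.
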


Recall that in Sec.~\ref{sec:Bdlimit}, our rate lower bound in \eqref{eq:RLB} contains three terms, the wideband capacity $C^\infty$, a non-linear rate penalty due to $\log(1 + \mathrm{SNR})$, and a rate penalty due to lack of CSIR. Below the optimal bandwidth occupancy $(\delta B)^*$, the third term of \eqref{eq:RLB} is smaller in absolute value than the second. Replacing the third term by the second term and substituting $\delta {=}\mathrm{SNR}^{1-\alpha}$, $\alpha{\in}(0,1)$ into \eqref{eq:RLB} produces the following sufficient condition in terms of the bandwidth occupancy $\delta B$, as stated in Corollary~\ref{cor:altRayTh2}.

\begin{corollary}
\label{cor:altRayTh2}
If $\delta B \leq (\delta B)^*$, the achievable rate is lower bounded by
\begin{equation}
 \label{eq:altRayLB}
 C(B,\delta)\geq\frac{PN_{\mathrm{r}}}{N_0}\left[1-\left(\frac{P}{BN_0}\right)^\alpha\frac{(\kappa{-}2{+}N_{\mathrm{t}}{+}N_{\mathrm{r}})}{N_{\mathrm{t}}}\right].
\end{equation}
\end{corollary}

On the other hand, above $(\delta B)_\mathrm{crit}$, the third term of \eqref{eq:RLB} is greater than the second. This means that $C(B,\delta)$ is smaller than \eqref{eq:altRayLB}, which leads to the necessary condition in Corollary~\ref{cor:altRayTh1}.

\begin{corollary}
\label{cor:altRayTh1}
In Rayleigh fading ($\kappa{=}2$), if $\delta {=}\mathrm{SNR}^{1-\alpha}$ and if
\begin{equation}
C(B,\delta)\geq\frac{PN_{\mathrm{r}}}{N_0}\left[1-\left(\frac{P}{BN_0}\right)^\alpha\frac{(N_{\mathrm{t}}+N_{\mathrm{r}})}{N_{\mathrm{t}}}\right],
\end{equation}
 then the bandwidth occupancy satisfies $\delta B<(\delta B)^+$.
\end{corollary}

Now we can use the necessary condition in Corollary~\ref{cor:altRayTh1} and the sufficient condition in Corrollary~\ref{cor:altRayTh2} on bandwidth occupancy $\delta B$  to prove the sufficient and necessary condition \eqref{eqn:llc}.
\begin{proposition}
\label{pro:RayTh2}
Corollary~\ref{cor:altRayTh2} implies the sufficient condition \eqref{eqn:llc} for Lemma~\ref{lem:ray}.
\end{proposition}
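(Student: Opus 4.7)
The plan is to insert the parameterization $\delta = \mathrm{SNR}^{1-\alpha}$ (where $\mathrm{SNR} = P/(BN_0)$) into Corollary~\ref{cor:altRayTh2} and match the resulting polynomial against the right-hand side of (\ref{eqn:ray}). With this choice, $P/(\delta BN_0)$ simplifies to $\mathrm{SNR}^{\alpha}$, so in the Rayleigh case $\kappa = 2$ the bracketed factor in the corollary becomes $[1 - \mathrm{SNR}^{\alpha}(N_\mathrm{t} + N_\mathrm{r})/N_\mathrm{t}]$. Dividing by $B$ then yields a lower bound of the form $N_\mathrm{r}\,\mathrm{SNR} - C\,\mathrm{SNR}^{1+\alpha}$ with exactly the polynomial shape of (\ref{eqn:ray}). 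Any factor-of-two gap between the constant $C$ delivered by the corollary and the sharper $N_\mathrm{r}(N_\mathrm{t}+N_\mathrm{r})/(2N_\mathrm{t})$ in (\ref{eqn:ray}) would be recovered by going back to Lemma~\ref{lem:RLB} and keeping its second and third terms separate, rather than upper-bounding the third by the second as done en route to the corollary; the slack then lies entirely in the $\Theta(\mathrm{SNR}^{1+\alpha+\epsilon})$ remainder.

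Next I would verify that under the hypothesis $L_c \geq \frac{N_\mathrm{t}^2}{(N_\mathrm{t}+N_\mathrm{r})^2}\mathrm{SNR}^{-2(\alpha+\epsilon)}$ of~\cite[Th.~2]{Ray2007noncoherent}, the precondition $\delta B \leq (\delta B)^*$ of Corollary~\ref{cor:altRayTh2} is automatically satisfied. With $\delta = \mathrm{SNR}^{1-\alpha}$ one has $\delta B = (P/N_0)\,\mathrm{SNR}^{-\alpha}$, while Lemma~\ref{lem:optimB} gives $(\delta B)^*$ proportional to $(P/N_0)\sqrt{L_c/\log L_c}$. Substituting the hypothesized lower bound on $L_c$ shows that $(\delta B)^*$ grows at least like $(P/N_0)\,\mathrm{SNR}^{-(\alpha+\epsilon)}/\sqrt{\log(1/\mathrm{SNR})}$, which dominates $(P/N_0)\,\mathrm{SNR}^{-\alpha}$ in the wideband limit $\mathrm{SNR}\to 0$, because $\mathrm{SNR}^{-\epsilon}$ eventually beats every power of $\log(1/\mathrm{SNR})$ for any $\epsilon > 0$.

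The remaining step is to absorb the third term of (\ref{eq:RLB}) into the $\Theta(\mathrm{SNR}^{1+\alpha+\epsilon})$ remainder. Its per-bandwidth magnitude equals $(\delta N_\mathrm{t}N_\mathrm{r}/L_c)\log\!\bigl(1+\mathrm{SNR}^{\alpha} L_c/N_\mathrm{t}\bigr)$; plugging in $\delta = \mathrm{SNR}^{1-\alpha}$ together with the hypothesis on $L_c$ yields an order $O(\mathrm{SNR}^{1+\alpha+2\epsilon}\log(1/\mathrm{SNR}))$, which is indeed $o(\mathrm{SNR}^{1+\alpha+\epsilon})$ as required. I expect the main obstacle to be the logarithmic bookkeeping---tracking the interplay between $\mathrm{SNR}^{\alpha}$, $L_c$, and $\log L_c$---so as to ensure that neither the $(\delta B)^*$ threshold check nor the third-term remainder is disturbed by polylogarithmic factors across the full admissible range $\epsilon \in (0,\alpha)$, and to confirm that the constants baked into the ``$\simeq$'' in (\ref{eq:optimB}) do not invalidate the comparison $\delta B \leq (\delta B)^*$ at any finite $\mathrm{SNR}$ of practical interest.
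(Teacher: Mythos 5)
Your proposal is correct and follows essentially the same route as the paper: substitute $\delta=\mathrm{SNR}^{1-\alpha}$ (so $\delta B=(P/N_0)\mathrm{SNR}^{-\alpha}$), translate the precondition $\delta B\leq(\delta B)^*$ into a condition on $L_c=B_cT_c$, and use the fact that $\mathrm{SNR}^{-2\epsilon}$ eventually dominates $(N_{\mathrm{t}}+N_{\mathrm{r}})\log(B_cT_c)$ to conclude that the hypothesis of \cite[Th.~2]{Ray2007noncoherent} implies it. Your additional bookkeeping---recovering the sharper coefficient $N_{\mathrm{r}}(N_{\mathrm{t}}+N_{\mathrm{r}})/(2N_{\mathrm{t}})$ by keeping the second and third terms of \eqref{eq:RLB} separate and absorbing the third into the $\Theta(\mathrm{SNR}^{1+\alpha+\epsilon})$ remainder---is a correct refinement that the paper leaves implicit.
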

\begin{proof}
Substituting $\delta{=}\mathrm{SNR}^{1-\alpha}$ and $\kappa{=}2$ into \eqref{eq:optimB}, we can rewrite $\delta B{<}(\delta B)^*$ in Corollary~\ref{cor:altRayTh2} as
\begin{align}
B_cT_c>\frac{N_{\mathrm{t}}^2}{(N_{\mathrm{r}}+N_{\mathrm{t}})^2}\mathrm{SNR}^{-2\alpha}(N_{\mathrm{r}}+N_{\mathrm{t}})\log(B_cT_c).\nonumber
\end{align}
iSnce $(N_{\mathrm{r}}+N_{\mathrm{t}})\log(B_cT_c)$ is a constant and $B\rightarrow\infty$, we have $(N_{\mathrm{r}}+N_{\mathrm{t}})\log(B_cT_c)\leq \mathrm{SNR}^{-2\epsilon}$ for any $\epsilon>0$. Therefore, it is also sufficient to have
\[B_cT_c \geq \frac{N_{\mathrm{t}}^2}{(N_{\mathrm{r}}+N_{\mathrm{t}})^2}\mathrm{SNR}^{-2(\alpha+\epsilon)},\]
which is a sufficient condition that Lemma~\ref{lem:ray} transforms in the upper limit of $\sigma\leq\epsilon$.
\end{proof}

\begin{proposition}
\label{pro:RayTh1}
Corollary~\ref{cor:altRayTh1} implies the necessary condition \eqref{eqn:llc} for Lemma~\ref{lem:ray}.
\end{proposition}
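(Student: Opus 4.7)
The plan is to mirror the proof of Proposition~\ref{pro:RayTh2}, but now running the implication in the other direction: instead of starting from $\delta B < (\delta B)^*$ to derive a lower bound on $L_c$, I start from the conclusion of Corollary~\ref{cor:altRayTh1}, namely $\delta B < (\delta B)^+$, which holds under the hypothesis that the polynomial rate lower bound is met. The substitution $\delta = \mathrm{SNR}^{1-\alpha}$ with $\mathrm{SNR} = P/(BN_0)$ gives the identity $\delta B = (P/N_0)^{1-\alpha} B^{\alpha}$, and therefore $(\delta B)^2 (N_0/P)^2 = \mathrm{SNR}^{-2\alpha}$; this is the algebraic hinge of the argument.

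Next I would plug in the closed form of $(\delta B)^+$ from \eqref{eq:dBcritical}, square the strict inequality $\delta B < (\delta B)^+$, and rearrange to isolate $L_c = B_cT_c$. Using the identity above, the $(P/N_0)^2$ factor cancels from both sides and what remains is a bound of the form
\[
\frac{L_c}{\log L_c} > \frac{N_{\mathrm{t}}^2}{4(N_{\mathrm{t}}+N_{\mathrm{r}})\log \pi}\,\mathrm{SNR}^{-2\alpha},
\]
which already exhibits the correct scaling $\mathrm{SNR}^{-2\alpha}$ of the necessary condition stated in \cite[Th.~1]{Ray2007noncoherent}.

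To match the exact form of that necessary condition I would then strip off the $\log L_c$ factor and adjust the numerical constant by the same slow-growth trick used in Proposition~\ref{pro:RayTh2}: since $\log L_c = O(\log B)$ while $\mathrm{SNR}^{-2\sigma} = (BN_0/P)^{2\sigma}$ grows polynomially, for any $\sigma > 0$ the bound $\log L_c \le \mathrm{SNR}^{-2\sigma}$ holds for all sufficiently large $B$. Absorbing this factor and the constant discrepancy into the exponent yields $L_c > \tfrac{N_{\mathrm{t}}^2}{(N_{\mathrm{t}}+N_{\mathrm{r}})^2}\mathrm{SNR}^{-2(\alpha-\sigma)}$, which is exactly the necessary condition asserted by \cite[Th.~1]{Ray2007noncoherent} with slack parameter $\sigma$.

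The main obstacle is purely cosmetic: our constant $\tfrac{N_{\mathrm{t}}^2}{4(N_{\mathrm{t}}+N_{\mathrm{r}})\log\pi}$ differs from the constant $\tfrac{N_{\mathrm{t}}^2}{(N_{\mathrm{t}}+N_{\mathrm{r}})^2}$ in \cite{Ray2007noncoherent}. Both have the same order in $\mathrm{SNR}$, and the mismatch is absorbed by the arbitrarily small exponent slack $\sigma$ in exactly the same way Proposition~\ref{pro:RayTh2} handled the symmetric gap on the sufficient side. The only real care required is verifying that the $\sigma$-absorption is legitimate in the regime $B\to\infty$ where $\mathrm{SNR}\to 0$, which is immediate because $\log L_c/\mathrm{SNR}^{-2\sigma} \to 0$ for any $\sigma>0$.
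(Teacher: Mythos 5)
Your setup is the same as the paper's: start from $\delta B<(\delta B)^+$, use $\delta B\,N_0/P=\mathrm{SNR}^{-\alpha}$, square, and rearrange \eqref{eq:dBcritical} into
\begin{equation*}
L_c \;>\; \frac{N_{\mathrm{t}}^2}{(N_{\mathrm{r}}+N_{\mathrm{t}})^2}\,\mathrm{SNR}^{-2\alpha}\cdot\frac{(N_{\mathrm{r}}+N_{\mathrm{t}})}{4\log\pi}\log(B_cT_c).
\end{equation*}
Up to this intermediate inequality you agree with the paper. The gap is in your last step: you import the $\sigma$-absorption trick from Proposition~\ref{pro:RayTh2}, but that trick runs in the wrong direction here. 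To conclude $L_c>\ell_{\min}$ from $L_c>\ell_{\min}\cdot F$ you need a \emph{lower} bound $F\geq 1$ on the leftover factor $F=\frac{(N_{\mathrm{r}}+N_{\mathrm{t}})}{4\log\pi}\log(B_cT_c)$; the upper bound $\log L_c\leq\mathrm{SNR}^{-2\sigma}$ that you invoke upper-bounds the right-hand side of an inequality whose left side is already known to exceed it, which yields nothing. Moreover, even granting your manipulation, the conclusion $L_c>\frac{N_{\mathrm{t}}^2}{(N_{\mathrm{r}}+N_{\mathrm{t}})^2}\mathrm{SNR}^{-2(\alpha-\sigma)}$ is strictly \emph{weaker} than the necessary condition of \cite[Th.~1]{Ray2007noncoherent} (since $\mathrm{SNR}^{-2(\alpha-\sigma)}<\mathrm{SNR}^{-2\alpha}$ for $\mathrm{SNR}<1$ and $\sigma>0$), and that condition carries no slack parameter to absorb anything into.

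The correct finish is simpler than what you attempted: $F\geq 1$ is equivalent to $B_cT_c\geq\pi^{4/(N_{\mathrm{r}}+N_{\mathrm{t}})}$, which always holds in the wideband regime $B_cT_c\gg1$, so the factor $F$ can simply be dropped while preserving the inequality, giving $L_c>\frac{N_{\mathrm{t}}^2}{(N_{\mathrm{r}}+N_{\mathrm{t}})^2}\mathrm{SNR}^{-2\alpha}=\ell_{\min}$ exactly. The asymmetry with Proposition~\ref{pro:RayTh2} is the whole point: on the sufficient side the $\log(B_cT_c)$ factor tightens the derived condition beyond the target and must be hidden inside $\mathrm{SNR}^{-2\epsilon}$, whereas on the necessary side it only strengthens the derived lower bound on $L_c$ and can be discarded for free.
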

\begin{proof} The necessary condition $\delta B<(\delta B)^+$ in Corollary~\ref{cor:altRayTh1} can be rewritten as 
\[B_cT_c>\frac{N_{\mathrm{t}}^2}{(N_{\mathrm{r}}+N_{\mathrm{t}})^2}\mathrm{SNR}^{-2\alpha }\frac{(N_{\mathrm{r}}{+}N_{\mathrm{t}})}{4\log\pi}\log(B_cT_c).\]
Therefore we can express the necessary condition that Lemma~\ref{lem:ray} sets as lower limit of $\sigma\geq0$, 
\[B_cT_c >\frac{N_{\mathrm{t}}^2}{(N_{\mathrm{r}}+N_{\mathrm{t}})^2}\mathrm{SNR}^{-2\alpha },
\]
as long as $\frac{(N_{\mathrm{r}}{+}N_{\mathrm{t}})}{4\log\pi}\log(B_cT_c)\geq 1$, i.e., $B_cT_c{\geq} \pi^{4/(N_{\mathrm{r}}+N_{\mathrm{t}})}$, which is always satisfied in wideband fading channels where $B_cT_c$ is very large, and thus $B_cT_c> \pi^2$.
\end{proof}

\begin{remark}
From Proposition~\ref{pro:RayTh2} and Proposition~\ref{pro:RayTh1}, it is not surprising that the power gain term $\log(B_cT_c)$ was lost in \cite{Ray2007noncoherent}, because this sub-polynomial variation of the result has been ``buried'' in the range of valid exponents $\epsilon$ of the error term $O(\mathrm{SNR}^{1+\alpha+\epsilon})$.
\end{remark}

\subsection{New Bounds on $(\delta B)_\mathrm{crit}$ using the Subquadratic Polynomial Rate Approximation}

In our analysis, Lemma~\ref{lem:optimB} prescribes a near-linear-in-power capacity lower bound which can be achieved by all signaling schemes  with $(\delta, B)$  as long as the bandwidth occupancy $\delta B$ equals some constant  $(\delta B)_\mathrm{crit}$. Our analysis does not provide the exact value of $(\delta B)_\mathrm{crit}$, but rather bounds it within $[(\delta B)^-, (\delta B)^+]$ in Lemma~\ref{lem:critB}.
On the other hand, the result in \cite[Th.~3]{Ray2007noncoherent}, reproduced here as Lemma \ref{lem:ray}, prescribes an entire family of parametrized bounds where the parameter $\epsilon$ controls both the error term of the generalized Taylor expansion and the resolution of bounding brackets around $(\delta B)_\mathrm{crit}$. Corollary~\ref{cor:rayB} makes this explicit.

\begin{corollary}\label{cor:rayB}
The necessary and sufficient condition \eqref{eqn:llc} of Lemma~\ref{lem:ray} shows that coherent capacity $C^\infty$ is approached by transmitting signals with bandwidth occupancy $\delta B$ within the limits
\begin{align}
 \delta B & <\frac{P}{N_0}\frac{N_\mathrm{r}+N_\mathrm{t}}{N_\mathrm{t}}\sqrt{B_cT_c}\triangleq (\delta B)^{\max},\label{eqn:cor:bdupp} \\
\delta B & >\frac{P}{N_0}\left(\frac{N_\mathrm{r}+N_\mathrm{t}}{N_\mathrm{t}}\sqrt{B_cT_c}\right)^{\frac{\alpha}{\alpha+\epsilon}} \triangleq (\delta B)^{\min}_{\epsilon}.\label{eqn:cor:bdlow}
\end{align}
\end{corollary}
\begin{proof}
Substituting $B_cT_c$, $\delta{=}\mathrm{SNR}^{1-\alpha}$ and $\mathrm{SNR}{=}\frac{P}{BN_0}$ into \eqref{eqn:llc}, we can obtain \eqref{eqn:cor:bdupp} and \eqref{eqn:cor:bdlow} by the fact that $\sigma{>}0$ and $\sigma{<}\epsilon$, respectively.
\end{proof}

Therefore for a given $\alpha$, which controls the level of peakiness $\delta$ and determines the wideband slope, we can observe a clear tradeoff, parametrized by $\epsilon{\in}(0,\alpha)$, between the accuracy of the Taylor polynomial and the 
resolution of the bandwidth brackets:
\begin{enumerate}
\item The \textbf{accuracy} of the capacity lower bound calculated in Lemma~\ref{lem:ray} is determined by the ratio between  $\mathrm{SNR}^{1+\alpha}$ and the error term $O(\mathrm{SNR}^{1+\alpha+\epsilon})$. The larger $\epsilon$, the better the approximation, since the error term will vanish faster as $B\to\infty$.
\item The \textbf{resolution} of the interval where $(\delta B)_{\mathrm{crit}}$ is contained, $[(\delta B)^{\min}_{\epsilon},(\delta B)^{\max}]$, is determined by the width of the interval. The smaller $\epsilon$, the better the resolution, as the lower boundary $(\delta B)^{\min}_{\epsilon}$ will increase and become tighter when $\epsilon$ becomes smaller.
\end{enumerate}

\subsection{Comparison of Critical Bandwidth Occupancy Estimators}

So far we have characterized the critical bandwidth occupancy $(\delta B)_\mathrm{crit}$ in two different ranges: by a pair of bracket $[(\delta B)^-, (\delta B)^+]$, in our analysis in Lemma~\ref{lem:critB}; and by a parametric interval $[(\delta B)^{\min}_{\epsilon},(\delta B)^{\max}]$, derived from \cite[Th.~3]{Ray2007noncoherent} in
Corollary~\ref{cor:rayB}. 
To explore the relationship between the two estimators, we compare the difference in the estimated value of $\alpha$ that each analysis produces. We do this because the exponent $\alpha$ provides a unique relation between $B$ and $\delta{=}\mathrm{SNR}^{1-\alpha}{=}(\frac{P}{N_0B})^{1-\alpha}$, allowing for scalar comparison of the methods.

We begin by representing $\alpha$ according to \cite[Th.~3]{Ray2007noncoherent}. From \eqref{eqn:llc} in Lemma \ref{lem:ray}, for given values of the coherence block length $B_cT_c$ and bandwidth $B \in[(\delta B)^{\min}_{\epsilon},(\delta B)^{\max}]$ can be written as
\begin{equation}\label{eqn:AlphaSigma}
\sigma+\alpha = \frac{\log(\frac{(N_\mathrm{t}+N_\mathrm{r})^2}{N_\mathrm{t}^2}B_cT_c)}{2\log(\mathrm{SNR}^{-1})}.
\end{equation}
From the fact that $\sigma>0$ we get 
\begin{equation}
\label{eq:amaxRay}
 \alpha<\alpha_{\max}\triangleq\frac{\log(\frac{(N_\mathrm{t}+N_\mathrm{r})^2}{N_\mathrm{t}^2}B_cT_c)}{2\log(\mathrm{SNR}^{-1})},
\end{equation}
and from the fact that $\sigma<\epsilon<\alpha$ we get 
\begin{align}
\alpha> \max \left(\frac{\alpha_{\max}}{2}, \ \alpha_{\min}(\epsilon)\right), 
\ \mbox{where} \
\alpha_{\min}(\epsilon)\triangleq\frac{\log(\frac{(N_\mathrm{t}+N_\mathrm{r})^2}{N_\mathrm{t}^2}B_cT_c)}{2\log(\mathrm{SNR}^{-1})}-\epsilon. \label{eq:aminRay}
\end{align}
Note that when $\epsilon$ decreases, $\alpha_{\min}(\epsilon)$ increases such that the range of $\alpha$ becomes smaller but at the same time the error term $O(\mathrm{SNR}^{1+\alpha+\epsilon})$ vanishes more slowly: improving the \textit{resolution} of the bandwidth occupancy range comes at the price of decreasing the \textit{accuracy} of the capacity polynomial approximation.
We can make an approximate selection of $\epsilon$ such that polynomial error term is in the order of a $p$-percent of the term $\mathrm{SNR}^{1+\alpha}$, i.e., finding  $\epsilon(p)$ such that 
\[\mathrm{SNR}^{1+\alpha}>\frac{100}{p}\mathrm{SNR}^{1+\alpha+\epsilon(p)}.\]
This generates a family of narrower estimated margins $[\alpha_{\min}(p),\alpha_{\max}]$ parametrized by the pre-selected error percentage $p\%$ by raising the lower bracket.

On the other hand, we can bound $\alpha$ using the critical bandwidth occupancy interval in Lemma~\ref{lem:critB} in combination with $\delta{=}\mathrm{SNR}^{1-\alpha}$. With $\delta B{=}(\delta B)^+$ we get
\begin{align}
\alpha^+&=\frac{\log(\frac{4(N_\mathrm{t}+N_\mathrm{r})\log\pi}{N_\mathrm{t}^2}\frac{B_cT_c}{\log(B_cT_c)})}{2\log(\mathrm{SNR}^{-1})} 
 =\frac{\log(\frac{(N_\mathrm{t}+N_\mathrm{r})^2}{N_\mathrm{t}^2}B_cT_c)}{2\log(\mathrm{SNR}^{-1})}-
\frac{\log(\frac{(N_\mathrm{t}+N_\mathrm{r})\log(B_cT_c)}{4\log\pi})}{2\log(\mathrm{SNR}^{-1})},
\end{align}
and with $\delta B=(\delta B)^-$ we get
\begin{align}
\alpha^- & =\frac{\log(\frac{1}{ 4(N_{\mathrm{t}}+N_{\mathrm{r}})\log\pi}\frac{B_cT_c}{\log(B_cT_c)})}{2\log(\mathrm{SNR}^{-1})}
 =\frac{\log(\frac{(N_\mathrm{t}+N_\mathrm{r})^2}{N_\mathrm{t}^2}B_cT_c)}{2\log(\mathrm{SNR}^{-1})}
-\frac{\log(4\log\pi\frac{(N_\mathrm{t}+N_\mathrm{r})^3}{N_\mathrm{t}^2}\log(B_cT_c))}{2\log(\mathrm{SNR}^{-1})}.
\end{align}
Recall that for any $\epsilon>0$ we have
$(N_{\mathrm{t}}+N_{\mathrm{r}})\log(B_cT_c)\leq {\displaystyle \lim_{\mathrm{SNR}\rightarrow0}}\mathrm{SNR}^{-2\epsilon}$. 
 This means that 
 we can show that $\alpha_{\max}>\alpha^+>\alpha_{\min}(\epsilon)$, and the interval between the three vanishes as $\epsilon{\to}0$.

\begin{remark}
All the results coincide in that $\alpha{\propto} \log(B_cT_c)$, making capacity of channels with low $B_cT_c$ approach their wideband limit very slowly with $SNR{\to}0$ and channels with high $B_cT_c$ approach the wideband limit faster. This is the main intuition of the results in \cite{Ray2007noncoherent}:  non-coherent channels approach the coherent channel capacity when coherence length is large enough.
\end{remark}

\subsection{Illustration}

\begin{figure}[t]
\centering
 \includegraphics[width=.6\columnwidth]{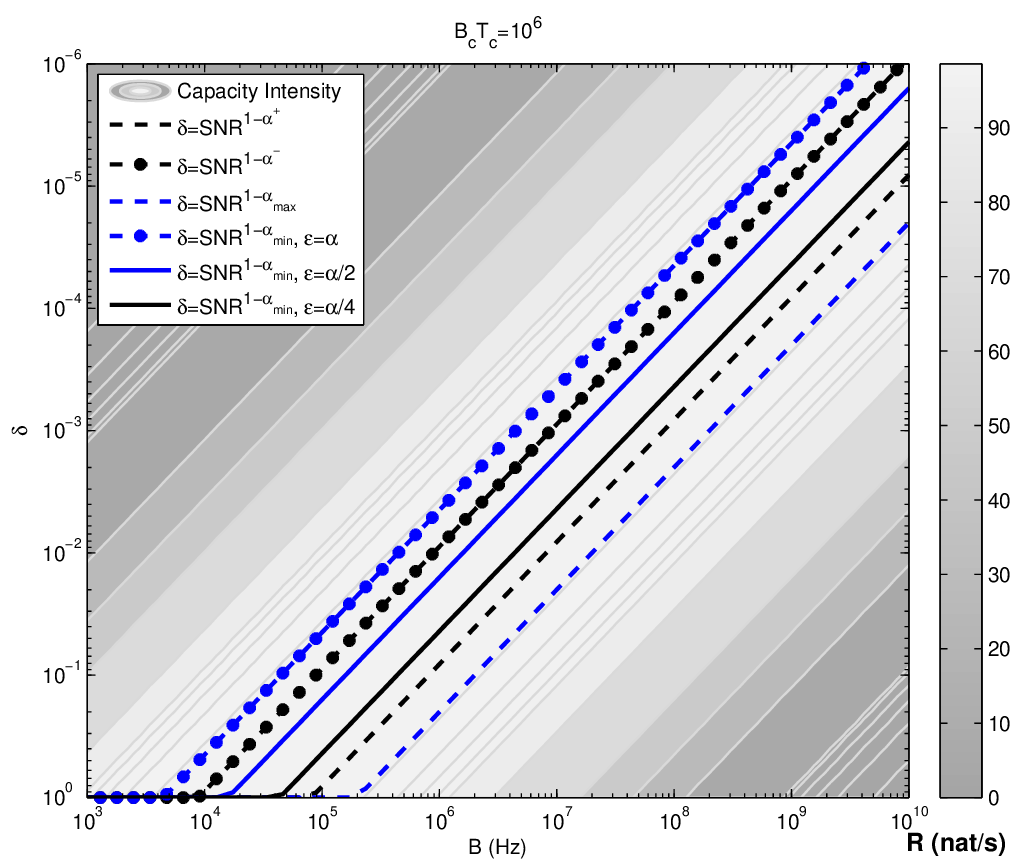}
 \includegraphics[width=.6\columnwidth]{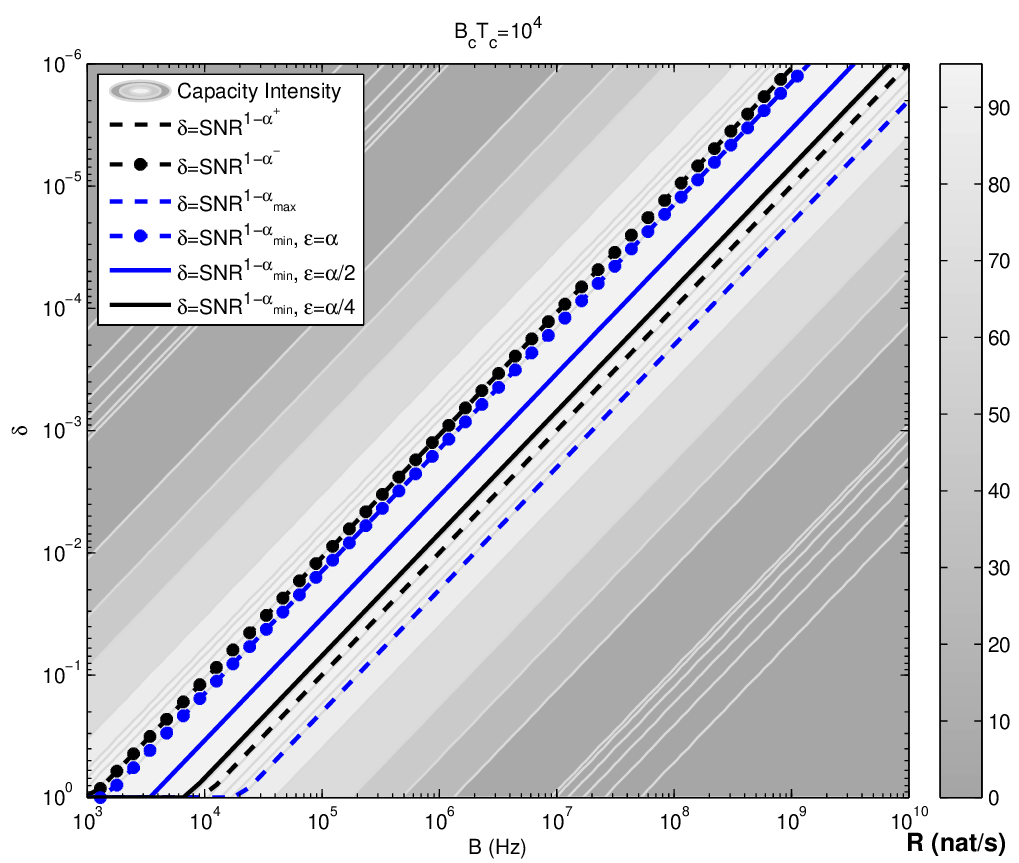}
 \caption{Capacity lower bound on the plane $(\delta,B)$ with $P/N_0{=}20$dB, $B_cT_c{=}10^6$ (first graph) and $B_cT_c{=}10^4$ (second graph). Curves with $\alpha_{\min}(\epsilon)$ are generated with $\epsilon{=}\alpha, \alpha/2, \alpha/4$, respectively.}
 \label{fig:CaphighBcTc}
\end{figure}

\begin{figure}[ht]
 \centering
 \includegraphics[width=.6\columnwidth]{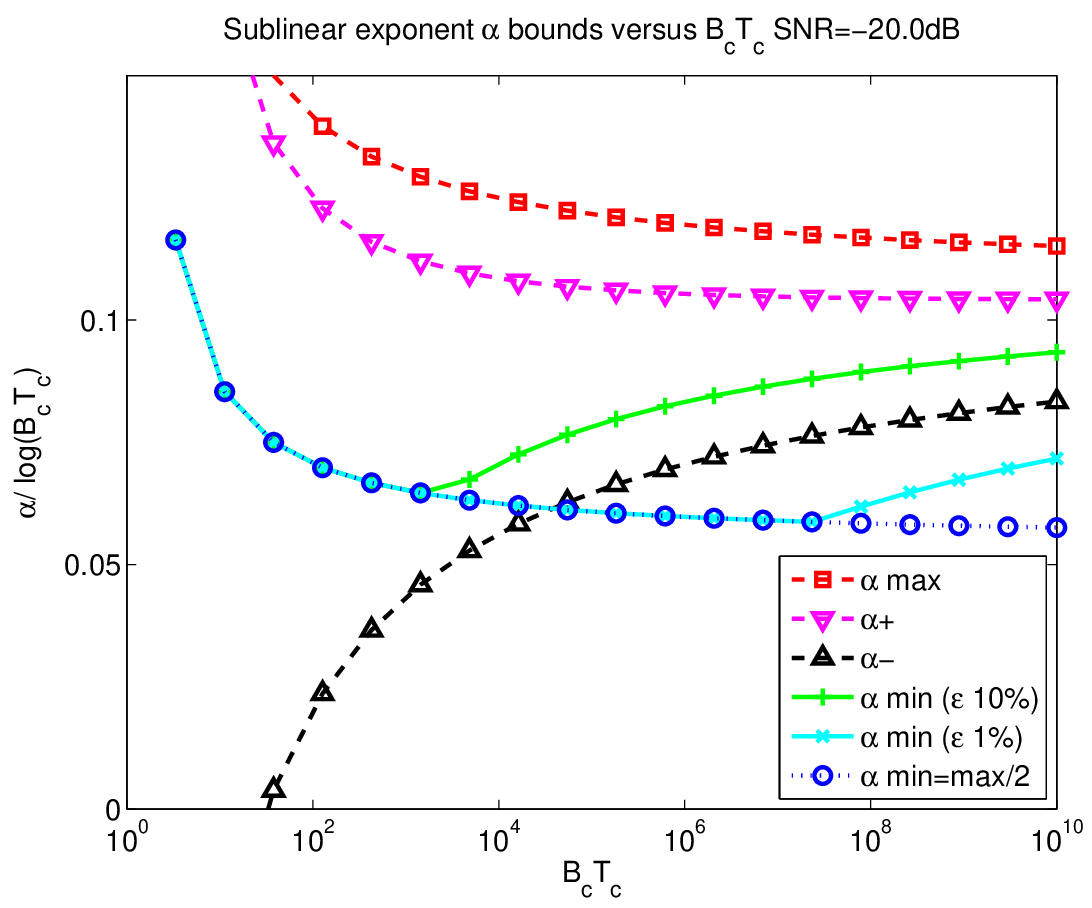}
 \caption{Evolution of $\frac{\alpha}{\log(B_cT_c)}$ versus $B_cT_c$ with $\mathrm{SNR}=-20$dB.}
 \label{fig:alphavsBcTc}
\end{figure}

We plot the capacity lower bound on the plane $(\delta, B)$ in Fig.~\ref{fig:CaphighBcTc} for $B_cT_c{=}10^6$ (first graph) and  for $B_cT_c{=}10^4$ (second graph).
The peak capacity is achievable in a region with constant product $\delta B$, starting at relatively large bandwidths, and both estimations of the optimal region are narrow.
The choice of $\epsilon$ determines the polynomial lower bound and therefore the range $[\alpha_{\min}(\epsilon),\alpha_{\max}]$. We can generate a set of estimations $\alpha_{\min}(\epsilon)$ by fine-tuning $\epsilon$ within the range $(0,\alpha)$, as shown by the curves corresponding to $\alpha_{\min}(\epsilon)$ with $\epsilon{=}\alpha/2, \alpha/4$, respectively. Note that the  conservative choice $\epsilon{=}\alpha$ leads to the widest possible range for $[\alpha_{\max}/2,\alpha_{\max}]$.
On the other hand, the resolution of the estimators from our own analysis $[\alpha^-,\alpha^+]$ depends only on the value of $B_cT_c$, and its range becomes smaller as $B_cT_c$ increases.

Since the resolution of the estimation by $[\alpha^-,\alpha^+]$ relies on $B_cT_c$ and the relative margin of $[\alpha_{\min},\alpha_{\max}]$ depends on  $\epsilon$,
we show in Fig.~\ref{fig:alphavsBcTc} the evolution of the two boundary methods with $\epsilon$ and $B_cT_c$. The method \cite{Ray2007noncoherent} produces the highest upper bound $\alpha_{\max}$  that does not change, and a family of lower bounds $\alpha_{\min}(p)$  depicted in the figure for errors of $1\%$ and $10\%$ and its lowest bound $\alpha_{\max}/2$. Note that at low coherence length, $B_cT_c$, the limit $\alpha_{\mathrm{min}}>\alpha_{\mathrm{max}}/2$ makes it impossible to select values of $\epsilon$ corresponding with a polynomial accuracy of $1\%$, and then $10\%$. This shows that the polynomial rate with peaky signaling in \cite{Ray2007noncoherent} also displays a gap from $C^\infty$ decreasing with $B_cT_c$. On the other hand, the critical bandwidth occupancy method produces boundaries that are loose at low coherence length but improve significantly when this parameter grows and that do not pay for tightness a price in accuracy of the polynomial approximation.

\section{Conclusions}\label{sec:conclusion}

In this paper we have unified the study of the rate approximations to $C^\infty$ for peaky and non-peaky signaling in non-coherent wideband fading channels where energy rather than spectrum is the limiting resource.
We have generalized the critical bandwidth analysis 
\cite{journals/twc/LozanoP12} to families of signaling schemes 
with varying bandwidth $B$ and transmission duty-cycle $\delta{\in}(0,1]$  to allow arbitrary levels of signal peakiness. We introduce the metric of bandwidth occupancy to measure the average bandwidth usage over time and define it as $\delta B$, the product between the bandwidth and the fraction of time it is in use.
Our main result shows the existence of a fundamental limit on the bandwidth occupancy in non-coherent channels for any level of signal peakiness. For all signaling schemes with the same bandwidth occupancy, as the bandwidth occupancy approaches its critical value $(\delta B)_\mathrm{crit}$, rates converge with the same asymptotic behavior to the same almost-linear in power value (measured in nats/s)
\[C(B,\delta=\frac{(\delta B)_\mathrm{crit}}{B})\geq \frac{PN_{\mathrm{r}}}{N_0}\left[1-\sqrt{\frac{1+\log (B_cT_c)}{B_cT_c}(\kappa-2+N_{\mathrm{t}}+N_{\mathrm{r}})\log\pi}\right],\]
where $T_c$ is the coherence time and $B_c$ is the coherence bandwidth. The rates decrease to zero as the bandwidth occupancy goes to infinity. Moreover, we provide upper and lower bounds to this critical value. The bounds have the same growth with $B_cT_c$ and $\frac{P}{N_0}$, and they only differ on a constant term.

To characterize 
the relation between the capacity with a tunable peakiness constraint $C(B,\delta)$ and the unconstrained non-coherent wideband capacity $C(B)$, we
rewrite the above capacity expression as a polynomial equivalent to the analysis in \cite{Ray2007noncoherent}. We have recovered the results in \cite[Th. 1-Th. 3]{Ray2007noncoherent} and obtained the almost-linear polynomial expressions for capacity in the limit $\delta B\to (\delta B)_{\mathrm{crit}}$ with a dominant sub-linear term $\mathrm{SNR}^{\alpha}$. 
As the bandwidth occupancy approaches the limit, capacity approaches the power-limited wideband limit with a speed of convergence determined by $\mathrm{SNR}^{1+\alpha}$,  which approaches that of coherent channels as $B_cT_c\rightarrow\infty$. The fundamental nature of the bandwidth occupancy measure reflects the fact that capacity of any signaling scheme is contained within the same bounds as long as the product $\delta B$ is constant.

Within this framework, limited bandwidth transmission with non-peaky signaling and unlimited bandwidth transmission with peaky signaling, which have been treated as very different  schemes, are shown to be merely two extreme points in a continuous range of transmission strategies within the same bounds as long as they have the same amount of bandwidth occupancy.  This suggest that for the practical goal of operating at a \textit{rate very close to $C^\infty$}, all pairs $(B,\delta)$ with the optimal occupancy do not exhibit significant differences. Achieving capacity, i.e. the supremum rate, may on the other hand only be possible in some specific distributions.
The selected peakiness  $\delta=\mathrm{SNR}^{1-\alpha}$ in \cite{Ray2007noncoherent} becomes invalid if $\mathrm{SNR}{>}1$ (as $\delta\leq1$ by design), whereas our model determines peakiness through $\delta B{<}(\delta B)_{\mathrm{crit}}$, a quantity that is well defined for all values of SNR. This gives the intuition that below the critical point it would be questionable to claim that the frequency-selective channel is in the wideband regime, and therefore regular non-peaky transmissions with full bandwidth occupancy must be employed. Beyond the critical point,  both signaling schemes provide the same capacity limit.

We have shown that most of the advantage of peaky signaling stems from harnessing power for long periods of time to transmit some infrequent flashes with boosted power, without encoding information in the position of the active symbols as in ON/OFF modulations.
Moreover, this power boost does not in fact outperform non-peaky transmission with the optimal bandwidth, which means that in practical systems the amount of peakiness and the bandwidth may be chosen at will as long as the maximum occupancy level is respected.

Our analysis has some limitations. Firstly, the potential spatial correlation among MIMO antennas is not accounted for. Secondly, although our capacity lower bounds are valid for general fading channels, our upper bound and critical bandwidth occupancy expressions assume Rayleigh fading. Besides, the performance of a signaling system with practical channel estimation techniques~\cite{Jindal2010}, peak constrained signals~\cite{Hajek2002b, Zhang07, Hajek2005, Lapidoth05}, finite modulation options, and non-ideal decoders may be degraded as compared to the theoretical bounds provided in this paper.

\appendices

\section{Justification of Our Fading Model Choice}
\label{app:dvSysMod}

As a general case, a wireless channel is modeled as a set $\mathcal{L}$ of paths, where each path $\ell\in\mathcal{L}$ is defined by a group delay $\tau_\ell$, a phase of arrival $\theta_\ell$, and an impulse response  $ {h}_\ell(t)$. For a pair of antennas $(u,v)$ with received signal $r^{(v)}(t)$ and transmitted signal $s^{(u)}(t)$,  we have
\begin{align}
 r^{(v)}(t) =s^{(u)}(t)*\sum_{\ell\in\mathcal{L}} {h}_\ell^{(u,v)}(t-\tau^{(u,v)}_\ell)e^{j\theta_\ell^{(u,v)}} + z^{(v)}(t)
	= s^{(u)}(t)*h^{(u,v)}(t) + z^{(v)}(t),
\end{align}
where $z^{(v)}(t)$ is the AWGN noise, and the channel delay spread $D$ and coherence time $T_c$ are determined by the aggregate channel impulse response $h^{(u,v)}(t)$. Traditionally, $ {h}_\ell^{(u,v)}(t)$s are scalar gains or narrow pulses that can be approximated by the Dirac delta function, in which case the set $\mathcal{L}$ would be a sort of ``ray tracing'' of perfect reflections of the signal with a scalar gain. However, recent mmWave meassurements have found much higher delay-spread values \cite{Rappaport2013} than those predicted in ray-tracing calculations \cite{Ryu2015}. This may be due to rich scattering from small objects in mmWave fading channels, which are not so sparse in practice. This is due to the fact that, although there are few arrival direction ``clusters'', in each cluster energy arrivals spread along many angular directions \cite{mustafa2013mmWave}. Therefore each arrival direction sees the additive effect of a large number of scattered reflections, not a single path, and each $ {h}_\ell^{(u,v)}(t)$ has a delay spread, instead of a scalar channel gain. The construction of discrete-time system models falls into the following three regimes depending on the sampling rate:

\begin{itemize}
\item Sampling $h^{(u,v)}(t)$ at low rate, all the energy in the delay spread $D$ would be captured by a single sampling interval, so the resulting discrete channel would be a scalar coefficient, which is approximately Gaussian distributed due to the law of large numbers. This is called the \textbf{narrowband}, or \textbf{frequency-flat} channel.
\item Sampling at higher rate would make the energy in $D$ be captured in multiple sampling intervals, each with an independent scalar coefficient. This is called the \textbf{wideband} channel, or \textbf{frequency selective} with rich scattering environment.
\item The third regime occurs when the number of sampling bins is much larger than the number of paths in  $\mathcal{L}$. The sampled channel coefficients are sparse and not Gaussian distributed. This is called the \textbf{ultra-wideband}.
\end{itemize}

We consider the wideband fading model to be relevant in mmWave communications where rich scattering and longer delay spread was observed~\cite{Rappaport2013}.
Our discrete equivalent channel is derived from the propagation described above by employing the classic framework of a Nyquist sampling at frequency $B$, the consideration of frequency-domain signaling with a $K$-point DFT, satisfying $K=BT_c$, and a cyclic prefix of negligible duration $M=B/B_c=K/B_cT_c\ll K$. The same channel model is employed in \cite{journals/twc/LozanoP12}.

\section{Compatibility With Another Common Model}
\label{app:bkSysMod}
In \cite{Ray2007noncoherent} the signals are divided into a set of $M=B/B_c$ narrowband channels (a.k.a. frequency bins) with encoding symbols defined with a symbol period of $1/B_c$. Each narrowband channel can be perfectly sampled at a rate of just $1$ sample per symbol period, and there are $M$ parallel frequency bands that produce $M$ samples per symbol period. In this scheme multiple symbols see the same channel realization and the channel coherence length is a block of $L_c=T_cB_c$ consecutive symbols. By indexing with $m$ the independent frequency bins and with $\ell$ the consecutive periods on the same channel block realization, we get the model
\begin{equation}
 \y[m,\ell]=\Hb[m,\ell]\x[m,\ell]+\z[m,\ell],
\end{equation}
where $\Hb[m,\ell]$ remains unchanged for $\ell=1,\ldots,L_c$.
To exploit channel coherence, the encoding process must design the transmitted signal for the $L_c$ consecutive symbols jointly, and the encoding model is represented with matrices as
\begin{equation}\label{eq:FBmodel}
 \Y[m]=\Hb[m]\X[m]+\Z[m],
\end{equation}
where the dimensions are $N_\mathrm{r}\times L_c=(N_\mathrm{r}\times N_\mathrm{t})(N_\mathrm{t}\times L_c)$.

In this model, for every encoding interval of length $L_c$ and across all $M$ frequency bins there are a total of $ML_c{=}K$ complex valued coefficients.
Therefore, this channel model provides exactly the same number of signaling dimensions for transmission as the model we have derived.
But the representations of the channel variation are different.  In this model there are fewer channel coefficients, each of them is i.i.d. and  identically repeated for every $L_c$ consecutive symbols. Whereas our derived model \eqref{eq:sysModel} supports any type of channel correlation, not only repetition, as long as there are $K$ correlated  coefficients generated by a fraction $1/B_cT_c$ of independent random variables. It is possible to represent the system model \eqref{eq:FBmodel} with repeated identical channel coefficients in our derived model format by replacing the matrix notation $\Hb[m]\X[m]$ with our vectorized notation $\Hb\x$ where $\Hb$ is a block-diagonal matrix with the values of $\Hb[m]$ in its main diagonal and zeros in the upper and lower triangles as in \eqref{eq:MatrixChannel}.

\section{Equivalence in Signaling Representation}
\label{app:bkSigRepr}

Our channel model uses Nyquist sampling at the full $B$ and therefore it is able to represent any signal with this bandwidth without loss. For the sake of completeness we will propose the exact formulation to implement a valid signal in the model of \cite{Ray2007noncoherent} (hereafter, filter-bank model) with our model (hereafter, OFDM model) using only preprocessing linear matrices. With this we show that any signal possible in the filter-bank model can be transmitted through the OFDM model, and therefore capacity results in our model are fully compatible.

Without loss of generality, let us assume a SISO channel and unit power to simplify notation. Assume also that the integers $K{=}\lceil T_cB\rceil$, $M{=}\lceil B/B_c\rceil$ and $L_c{=}\lceil T_cB_c\rceil$ are satisfied exactly so we may use simply $K{=}ML_c$.
In continuous time, the filter-bank model is represented in 
\begin{equation}
r(t){=}\sum_{i=-\infty}^\infty \delta(t{-}iT_c)*\left(\sum_{\ell=0}^{L_c-1}\delta(t{-}\ell/B_c)*\left(\sum_{m=0}^{M-1} h_i[m,\ell]x_i[m,\ell]\mathrm{sinc}(tB_c)e^{-j2\pi mB_c t}\right)\right){+}z(t),
\label{eq:ctsignal}
\end{equation}
%
where 
multiplication by $h_i[m,\ell]$ and $x_i[m,\ell]$ assigns the scalar value received in each frequency bin $m\in\{0,\ldots,M{-}1\}$ and in each transmit symbol period $\ell\in\{0,\ldots,L_c{-}1\}$.

We separate the encoding for each channel block realization indexed by $i$, drop the index, and use the fact that the channel coefficient in each frequency bin remains the same for all symbols to take away the index $\ell$ from $h[m]$. This gives
\begin{align} \label{eq:ct_i}
y(t)=
&\sum_{\ell=0}^{L_c-1}\delta(t{-}\ell/B_c)*\left(\sum_{m=0}^{M-1} h[m]x[m,\ell]\mathrm{sinc}(tB_c)e^{j2\pi mB_c t}\right){+}z(t).
\end{align}
With this continuous-time signal, we apply Nyquist sampling at rate $B$ to generate  $K{=}BT_c$ samples per sequence.
Notice that for integer $M{=}B/B_c$, the discrete sinc function is $\mathrm{sinc}[n/M]\triangleq\mathrm{sinc}(\frac{nB_c}{B})$ and the delta delay on index $n$ is $\ell M$. We can represent \eqref{eq:ct_i} by
\begin{align}
y[n]{=} \sum_{\ell=0}^{L_c-1}\delta[n{-}\ell M]*\left(\sum_{m=0}^{M-1} h[m]x[m,\ell]\mathrm{sinc}[\frac{n}{M}]e^{j2\pi \frac{n}{K}mL_c}\right){+}z[n], \ n=0,\ldots,K{-}1.  
\end{align}
We compute the $K$-point DFT, 
\begin{align}
y[k]
& = \sum_{\ell, m}    e^{-j2\pi \frac{k\ell}{L_c}}h[m]x[m,\ell]\mathrm{rec}[k/L_c-m]{+}z[k].
\end{align}

The rectangular window equals one only when $\lfloor k/L_c\rfloor{=}m$. 
By representing $k{<}K$ as $k=u*L_c + v$ with $u\triangleq\lfloor k/L_c\rfloor$ and $v\triangleq k \mod L_c$, we obtain 
\begin{equation}\nonumber
y[k]=h[u]\sum_{\ell=0}^{L_c-1} e^{j2\pi \frac{v\ell}{L_c}}x[u,\ell]+z[k], \ \mbox{with }
\left\{\begin{array}{l}
u{=}\lfloor k/L_c\rfloor,\\
v{=}k \mod L_c.
\end{array}\right.
\end{equation}
Now we can see that the sum is actually the $v$th element in the $L_c$-IDFT of the sequence $x[u,\ell]$.
Since the IDFT of a sequence $\mathbf{a}{=}(a_1\dots a_{L_c})^T$ can be written as a matrix product $\mathrm{IDFT}(\mathbf{a})=\F\mathbf{a}$,
We can represent the system model the same way as our matrix channel notation as
\begin{equation}
 \y=\Hb\Phi\x+\z,
\end{equation}
where $\Hb$ for SISO is a $K\times K$ diagonal matrix with its $k$-th diagonal element $h[u]$,  $\x$ is $K\times 1$ with $\x^{(k)}=x[u,v]$. The $L_c$-IDFT is computed by the block-diagonal square  matrix
\begin{equation}
\label{eq:MatrixlDFT}
 \Phi=\left(\begin{array}{c|c|c}
           \F&\dots&\zero\\\hline
           \vdots&\ddots&\vdots\\\hline
           \zero&\dots&\F\\
          \end{array}\right).
\end{equation}

This shows that any channel of the filter-bank model can be represented by the OFDM model using a channel matrix $\tilde{\Hb}=\Hb\Phi$. The reciprocal compatibility can be proven by taking a precoding DFT matrix at the transmitter, $\tilde{\x}=\Phi^{\dag}\x$, which leads to 
\begin{align}\nonumber
\y=\tilde{\Hb}\tilde{\x}=\Hb\Phi\tilde{\x}=\Hb\Phi\Phi^{\dag}\x=\Hb\x.
\end{align} 
The multiplication by $\Phi^{\dag}$ is unitary, so if the OFDM model uses $\x\sim \mathcal{CN}(\vmu,\Sb)$ and $\Phi$ is a full-$K$-rank square orthonormal matrix, then $\Phi^{\dag}\x\sim \mathcal{CN}(\Phi^{\dag}\vmu,\Phi^{\dag}\Sb\Phi)$. Gaussian distribution is maintained when the channel model is changed, and the mutual information results for both channel models supported by our bounds based on Gaussian inputs are completely equivalent.

\section{Proof of Lemma 1}
\label{app:Prooflem1}

Since the receiver knows which phase the duty cycle is in (e.g., scheduled according to a pseudo-random sequence),  {}{the rate can be determined via} the  chain rule
 \begin{equation}
 \label{eq:chainrule}
  {\frac{1}{T_c}\CInf{\x}{\y}{c}=\frac{\delta}{T_c} \CInf{\x}{\y}{c{=}1} + (1{-}\delta)\cdot 0 =\frac{\delta}{T_c}\CInf{\x,\Hb}{\y}{c{=}1} -\frac{\delta}{T_c}\CInf{\Hb}{\y}{\x,c{=}1}},
 \end{equation}
 {}{where the first step comes from the fact that $\Hb\x{=}0$ in the idle block ($c{=}0$) and $P_r(c{=}1)=\delta$.
 During the active block} the input follows a Gaussian distribution $\mathcal{CN}(0,\frac{P}{\delta B N_0})$ {}{and the first term in \eqref{eq:chainrule} can be lower bounded by} 
 \begin{align}
  \frac{{}{\delta}}{T_c}\CInf{\x,\Hb}{\y}{{}{c{=}1}}&\geq\frac{{}{\delta}}{T_c}\CInf{\x}{\y}{\Hb{}{,c{=}1}}
	       =\delta\times \Ex{\Hb}{\frac{1}{T_c}\log\det(\I_{KN_\mathrm{r}}{+}\frac{P}{\delta BN_{\mathrm{t}}N_0}\Hb\Hb^{\dag})},
 \end{align}
where the first step is from the non-negativity of mutual information, and the second is due to independence of channel coefficient $\Hb$ in each subcarrier and transmit antenna.
Furthermore,
 \begin{align}
 &\delta \textnormal{E}_\Hb \bigg[\frac{1}{T_c}\log\det(\I_{KN_\mathrm{r}}+\frac{P}{\delta BN_{\mathrm{t}}N_0}\Hb\Hb^{\dag})\bigg] 
  {=} \delta \frac{K}{T_c} \textnormal{E}_\Hb \bigg[\log\det(\I_{N_\mathrm{r}}+\frac{P}{\delta BN_{\mathrm{t}}N_0} \hat{\Hb}\hat{\Hb}^{\dag})\bigg]\nonumber\\
	&{\stackrel{(a)}{=}} \delta \frac{K}{T_c}\sum_{i=1}^{\min(N_{\mathrm{t}},N_{\mathrm{r}})}\Ex{\Hb}{\log(1+\frac{P}{\delta BN_{\mathrm{t}}N_0}\lambda_i)} \nonumber\\
  &{\stackrel{(b)}{\geq}}  \delta B\Ex{\Hb}{\frac{P \ \tr(\hat{\Hb}\hat{\Hb}^{\dag})}{\delta BN_{\mathrm{t}}N_0}-\left(\frac{P}{\delta BN_{\mathrm{t}}N_0}\right)^2\frac{\tr\left((\hat{\Hb}\hat{\Hb}^{\dag})^2\right)}{2}} \nonumber\\
    &{\stackrel{(c)}{=}} \frac{PN_{\mathrm{r}}}{N_0}\left[1{-}\frac{P/(\delta B)}{2N_{\mathrm{r}}N_{\mathrm{t}}^2N_0}\textnormal{E}_H \bigg[ \sum_{t,r}|h_{t,r}|^4{+} \sum_{t\neq u,r}|h_{t,r}|^2|h_{u,r}|^2 
    {+} \sum_{t,r\neq v}|h_{t,r}|^2|h_{t,v}|^2 {+} \sum_{t\neq u, r\neq v}h_{t,r}h_{u,r}^*h_{t,v}^*h_{u,v}\bigg] \right] \nonumber\\
    &{\stackrel{(d)}{=}} \frac{PN_{\mathrm{r}}}{N_0}\left[1-\frac{P\left( N_{\mathrm{t}}N_{\mathrm{r}}\kappa+N_{\mathrm{t}}N_{\mathrm{r}}(N_{\mathrm{r}}{-}1)+N_{\mathrm{r}}N_{\mathrm{t}}(N_{\mathrm{t}}{-}1) \right)}{2\delta BN_{\mathrm{r}}N_{\mathrm{t}}^2N_0}\right] \nonumber\\
    & {=}  \frac{PN_{\mathrm{r}}}{N_0}\left[1-\frac{P}{2\delta BN_{\mathrm{t}}N_0}\left(\kappa-2+N_{\mathrm{t}}+N_{\mathrm{r}} \right)\right],
  \end{align}
	where $\lambda_i$ are eigenvalues of $\hat{\Hb}\hat{\Hb}^{\dag}$ with $\hat{\Hb} {=} [h_{t,r}]_{N_{\mathrm{r}}\times N_{\mathrm{t}}}$ representing the  diagonal blocks of $\Hb$, and $h_{t,r}$ is the $(r,t)$-th element in $\hat{\Hb}$. Equation $(a)$ comes from the fact that $\Hb[k]$ are identically distributed for all $k{=}0,\ldots,K{-}1$, $(b)$ is due to  $\log(1{+}x){\geq} x{-}x^2/2$ for $x{\to}0$ and the fact that $\sum_i \lambda_i{=}\tr(\hat{\Hb}\hat{\Hb}^{\dag})$ and $\sum_i \lambda^2_i{=}\tr((\hat{\Hb}\hat{\Hb}^{\dag})^2)$. Equation $(c)$ is 
	by careful rearrangement. Equation $(d)$ comes from $\Ex{\Hb}{|h|^2}{=}1$, $\Ex{\Hb}{|h|^4}{=}\kappa$, $\Ex{\Hb}{h}{=}0$, and independence of matrix entries.

To upper bound the second term we choose $\Hb$ to be Rayleigh fading (with the maximum entropy) and interpret $\mathbf{x}$ as a pilot signal that gives side information between $\Hb$ and $\y$. 
\begin{align}\label{eq:HestUpp}
\CInf{\Hb}{\y}{\x {}{,c{=}1} }  \leq  \CInf{\Hb_{\mathrm{Gaussian}}}{\y}{\x_\textnormal{Pilots Signal}{}{,c{=}1}} \triangleq \overline{\CInf{\Hb}{\y}{\x{}{,c{=}1}}}.
\end{align}
An example for channel estimation would be a system where the pilot signal transmitted on antenna $u$ is a $uM$ times delayed version of the signal on antenna $1$. After transmitting $K$ pilot symbols, at each receive antenna a $K$-equation $MN_{\mathrm{t}}$-unknowns linear  estimation problem  is established and can be solved using the MMSE estimator.

Let $\Lambda^{(v)}$ be the $ MN_{\mathrm{t}}\times MN_{\mathrm{t}}$ diagonal matrix containing in its $uM{+}m$  diagonal element $g_{uM+m}=\Ex{}{|h[m]^{(u,v)}|^2}$ (the gain of the $m$-th channel tap in the $(u,v)$ transmit and receive antenna pair), and let $\Xi$ be a $K\times MN_{\mathrm{t}}$ circulant matrix ($MN_{\mathrm{t}}{<}K$) containing $\tilde{\x}_{(i-j)\mod K}$ in its $(i,j)$-th coefficient, where $\tilde{\x}{=}\x/\sqrt{P}$ is unit-power pilot signal. Notice that the mention of pilot signals here is to upper bound a mutual information term, rather than implementing a practical channel estimation as required in a coherent receiver.
 Exploiting the fact that channel estimation is carried out on each receive antenna concurrently based on the hypothetical pilot signal $\Xi$ from all transmit antennas,
 we get that the upper bound results in
\begin{align}
\frac{{}{\delta}}{T_c}\overline{\CInf{\Hb}{\y}{\x{}{,c{=}1}}}& {=} \frac{\delta}{T_c}\sum_{v=1}^{N_\mathrm{r}}\Ex{}{\log\det\left(\I+\frac{P/(\delta B)}{N_{\mathrm{t}}N_0}\Xi^{\dag}\Xi\Lambda^{(v)}\right)} \nonumber\\
&{\stackrel{(a)}{\leq}} \frac{\delta N_\mathrm{r}}{T_c}MN_{\mathrm{t}}\Ex{}{\log\left(\frac{1}{MN_{\mathrm{t}}}\tr\left(\I+\frac{P/(\delta B)}{N_{\mathrm{t}}N_0}\Xi^{\dag}\Xi\Lambda^{(1)}\right)\right)}\nonumber\\
&{\stackrel{(b)}{=}}  \frac{\delta B N_\mathrm{r}N_{\mathrm{t}}}{B_cT_c} \Ex{}{\log\hspace{-1mm}\left(\hspace{-1mm}1{+}\frac{P/(\delta B)}{M N_{\mathrm{t}}^2N_0}\sum_{n=1}^{MN_{\mathrm{t}}}  g_{n} \sum_{k=0}^{K-1}|\tilde{x}[k{-}n{-}1]|^2\right)} \nonumber\\
&{\stackrel{(c)}{\leq}} \frac{\delta B N_\mathrm{r}N_{\mathrm{t}}}{B_cT_c} \log\left(1{+}\frac{P/(\delta B)}{M N_{\mathrm{t}}^2N_0}K \sum_{n=1}^{MN_{\mathrm{t}}} g_{n}\Ex{}{\frac{1}{K}\sum_{k=0}^{K-1}|\tilde{x}[k]|^2}\right) \nonumber\\
&{\stackrel{(d)}{=}} \frac{\delta B N_\mathrm{r}N_{\mathrm{t}}}{B_cT_c} \log\left(1+\frac{P/(\delta B)}{ M N_{\mathrm{t}}^2N_0}K\sum_{n=1}^{MN_{\mathrm{t}}} g_{n}\right) \nonumber\\
&{\stackrel{(e)}{\leq}} \frac{\delta B N_\mathrm{r}N_{\mathrm{t}}}{B_cT_c} \log\left(1+\frac{P/(\delta B)}{M N_{\mathrm{t}} N_0}K\right) \nonumber\\	
&{\stackrel{(f)}{=}} \frac{\delta B N_\mathrm{r}N_{\mathrm{t}}}{B_cT_c} \log\left(1+\frac{P}{\delta B N_0 N_{\mathrm{t}}}(B_cT_c)\right), \label{eq:RpilotLB}		
\end{align}
where $(a)$ stems from  the AM–GM inequality and that channel gains between all antenna pairs are i.i.d, $(b)$ is due to the fact that $\Xi$ is a  circulant matrix, which has the same coefficients shifted across all its columns, so its eigenvalues are the DFT coefficients of the columns, $(c)$ is Jensen's inequality, $(d)$ derives from the fact that $\tilde{x}$ has unit power, and $(e)$ is due to the upper bound  of squared channel coefficients $\sum_{n=1}^{MN_{\mathrm{t}}} g_{n}\leq N_{\mathrm{t}}$, and $(f)$ uses $\frac{K}{M}=B_cT_c$.

\section{Proof of Lemma 2}
\label{app:Prooflem2}

Taking partial derivative of \eqref{eq:RLB} w.r.t. the product $\delta B$, we obtain
\begin{equation}
\frac{\partial R^{LB}( \delta B)}{\partial (\delta B)}{=}\frac{PN_{\mathrm{r}}}{N_0}\bigg[\frac{P(\kappa{-}2{+}N_{\mathrm{t}}{+}N_{\mathrm{r}})}{2(\delta B)^2N_{\mathrm{t}}N_0}{-}\frac{N_0N_{\mathrm{t}}}{PB_cT_c}\log\left(1{+}\frac{P B_cT_c}{(\delta B)N_{\mathrm{t}}N_0}\right){+}\frac{1}{\delta B\left(1{+}\frac{P B_cT_c}{N_0N_{\mathrm{t}}(\delta B)}\right)}\bigg].
\label{eq:dRLB}
\end{equation}
Near the maximum of $R^{LB}(\delta B)$ the term $\frac{P}{(\delta B)N_0}B_cT_c$ is either ${\gg}1$ or ${\simeq} 1$; because  $R^{LB}(\delta B)$ is already approaching zero if $\frac{P}{(\delta B)N_0}B_cT_c\ll1$. This means we can make the approximation
\begin{equation}
\frac{(\kappa-2+N_{\mathrm{t}}+N_{\mathrm{r}})}{2}\simeq\frac{\log(1+\frac{P}{(\delta B)^*N_0}B_cT_c)}{\left( {P}/{(N_{\mathrm{t}}N_0(\delta B)^*)}\right)^2B_cT_c},
\end{equation}
which solves as \eqref{eq:optimB}. Evaluating $R^{LB}(\delta B^*)$ and using the same inequality in \cite{journals/twc/LozanoP12} produces  \eqref{eq:optimR}.

\section{Proof of Lemma 3}
\label{app:Prooflem3}

 We upper bound the first term in \eqref{eq:chainrule} enforcing signal bandwidth $B$ and duty cycle $\delta$.
 \begin{align}
  \frac{{}{\delta}}{T_c} \CInf{\x,\Hb}{\y}{{}{c{=}1}}&\stackrel{(a)}{=}\frac{{}{\delta}}{T_c} \dEnt{\y|{}{c{=}1}
  } {-}\frac{{}{\delta}}{T_c}\dEnt{\y|\Hb,\x{}{,c{=}1}}\nonumber\\
  &\stackrel{(b)}{=}\frac{{}{\delta}}{T_c} \dEnt{\Hb\x+\z|{}{c{=}1}} {-}\frac{{}{\delta}}{T_c}\dEnt{\z}\nonumber\\
  &\stackrel{(c)}{\leq}\frac{\delta}{T_c} \dEnt{\mathcal{CN}(0,\I \frac{P}{\delta}+BN_0)}{-}\frac{\delta}{T_c}\dEnt{\z} \label{eq:Lem3-1st}\\
	      &=\delta N_{\mathrm{r}}B\log(1+\frac{P}{\delta BN_0}),\nonumber
  \end{align}
  where (a) is from the definition of mutual information; (b) is from the channel model; (c) comes from the fact that ${\z}$ is independent of $\x$ and $\Hb$, and $ \dEnt{\Hb\x+\z|{}{c{=}1}}$ is maximized by a Gaussian distribution under the power  constraint $\frac{P}{\delta}+BN_0$.
 Use the approximation $\log(1{+}x)=x{-}x^2/2+o(x^2)$, we can rewrite \eqref{eq:Lem3-1st} as
 \begin{equation}
 \begin{split}
   \frac{{}{\delta}}{T_c} \CInf{\x,\Hb}{\y}{{}{c{=}1}}&\leq\frac{PN_{\mathrm{r}}}{N_0}\left[1-\frac{P}{2\delta BN_0}\right]+o(\frac{1}{\delta B}).
  \end{split}
 \end{equation}

For the second term of \eqref{eq:chainrule}, with the Rayleigh fading assumption, the inequality in \eqref{eq:HestUpp} is met with equality. From there on, upper bounds are found by taking a couple of minimums in the argument of the logarithm.
\begin{align}
 \frac{{}{\delta}}{T_c}\underline{\CInf{\Hb}{\y}{\x{}{,c{=}1}}}=&\frac{\delta}{T_c}\sum_{v=1}^{N_\mathrm{r}}\Ex{}{\log\det\left(\I+\frac{P/(\delta B)}{ N_{\mathrm{t}} N_0}\Xi^{\dag}\Xi\Lambda^{(v)}\right)} \nonumber\\
\stackrel{(a)}{\geq}& \frac{\delta N_\mathrm{r}}{T_c}\Ex{}{\log\det\left(\I+\frac{P g_{\min}}{\delta BN_{\mathrm{t}} N_0}\Xi^{\dag}\Xi\right)} \nonumber\\
 \stackrel{(b)}{=}& \sum_{n=1}^{MN_{\mathrm{t}}} \frac{\delta N_\mathrm{r}}{T_c}\Ex{}{\log\left(1+\frac{P g_{\min}}{\delta BN_{\mathrm{t}} N_0}\lambda_n(\Xi^{\dag}\Xi)\right)} \nonumber\\
\stackrel{(c)}{\geq}&	 \frac{ \delta BN_\mathrm{r}N_{\mathrm{t}}}{B_cT_c}\Ex{}{\log\left(1{+}\frac{P g_{\min}\psi}{\delta N_{\mathrm{t}} N_0B}B_cT_c\right)}, \label{eq:RpilotUB}
\end{align}
Equation $(a)$ is due to $g_{\min}{=}\min_{m,u,v} \Ex{}{|h[m]^{(u,v)}|^2}$ is the minimum element in the diagonals of $\Lambda^{(v)}$ and among all $v$'s, and $(b)$ stems from the relation between determinant and eigenvalues. 
Since $\Xi$ is a $K{\times}MN_t$ circulant matrix containing the power normalized vector $\x/\sqrt{P}$ in its first column, the  $n$-th eigenvalue of $\Xi^{\dag}\Xi$ is given by
\[\lambda_n(\Xi^{\dag}\Xi)=|\textstyle{}\sum_{k=0}^{K-1} \frac{\x_k}{\sqrt{P}} e^{-j2\pi \frac{kn}{MN_{\mathrm{t}}}}|^2 \triangleq  K \psi_{K,n}, \ n=1,\ldots,MN_t.\] 
Since $\Ex{}{\psi_{K,n}}\leq \frac{1}{KP} |\sum_{k=0}^{K-1}\x_k|^2\leq 1$ owing to the power constraint $\Ex{}{\x}\leq P$, we obtain $(c)$ by the fact that $B_cT_c<K$ and by  the definition of $\psi$ in (12). 
Moreover, we have $\psi>0$ because the rate penalty of non-peaky inputs in active cycles is non-zero ($\delta \CInf{\Hb}{\y}{\tilde{\x}/\sqrt{\delta}}>0$).

\section{Proof of Lemma 4}
\label{app:Prooflem4}

 We define $(\delta B)^\pm$ such that
\begin{equation}
\label{eq:defBpm}
 \frac{P}{(\delta B)^\pm N_0}=\sqrt{\Omega \frac{\log (B_cT_c)}{B_cT_c}}+o(\sqrt{\frac{\log (B_cT_c)}{B_cT_c}}).
\end{equation}
Substituting \eqref{eq:defBpm} into \eqref{eq:RUB} we obtain that
\begin{align}
R^{UB}(\delta B)= \frac{PN_{\mathrm{r}}}{N_0}\Bigg[1{-}\frac{1}{2}\sqrt{\Omega \frac{\log (B_cT_c)}{B_cT_c}}
{-} N_{\mathrm{t}}\frac{\Ex{ }{\log(1{+}\sqrt{\Omega B_cT_c\log (B_cT_c)} g_{\min} \psi/N_{\mathrm{t}})}}{\sqrt{\Omega B_cT_c\log (B_cT_c)}}\Bigg] 
 &\nonumber\\
+  o(\sqrt{\frac{\log (B_cT_c)}{B_cT_c}}). &
\end{align}
We separate the logarithm in two parts
\begin{align}
R^{UB}(\delta B)=\frac{PN_{\mathrm{r}}}{N_0}\Bigg[1{-}&\frac{1}{2}\sqrt{\Omega \frac{\log (B_cT_c)}{B_cT_c}}
   {-}\frac{1}{2}\frac{N_{\mathrm{t}}\log(B_cT_c)}{\sqrt{\Omega B_cT_c\log (B_cT_c)}} \nonumber \\
- & N_{\mathrm{t}}\frac{\Ex{ }{\log(\frac{1}{\sqrt{B_cT_c}}{+}\sqrt{\Omega \log (B_cT_c)} g_{\min} \psi/N_{\mathrm{t}})}}{\sqrt{\Omega B_cT_c\log (B_cT_c)}}\Bigg]
+   o(\sqrt{\frac{\log (B_cT_c)}{B_cT_c}}).
\end{align}
Since $E[\psi]\leq1$, the third negative part is also $o(\sqrt{\frac{\log (B_cT_c)}{B_cT_c}})$. We have
\begin{align}
R^{UB}(\delta B) =   \frac{PN_{\mathrm{r}}}{N_0}\left[1{-}\sqrt{\frac{\log (B_cT_c)}{B_cT_c}}\frac{1}{2}\left(\sqrt{\Omega} {+}\frac{N_{\mathrm{t}}}{\sqrt{\Omega}}\right)\right]
 + o(\sqrt{\frac{\log (B_cT_c)}{B_cT_c}}).
\end{align}

We will make this upper bound equal the achievable value in \eqref{eq:optimR},  which leads to
\begin{equation}
 \frac{1}{2}\left(\sqrt{\Omega} +\frac{N_{\mathrm{t}}}{\sqrt{\Omega}}\right)=\sqrt{(\kappa-2+N_{\mathrm{t}}+N_{\mathrm{r}})\log\pi}+o(\sqrt{\frac{B_cT_c}{\log (B_cT_c)}}).
\end{equation}
By making change of variable $\Upsilon=\Omega/N_{\mathrm{t}}$  we get
\begin{equation}
\label{eq:simmetry}
 \left(\sqrt{\Upsilon} +\frac{1}{\sqrt{\Upsilon}}\right)=2\sqrt{(\frac{\kappa-2+N_{\mathrm{r}}}{N_{\mathrm{t}}}+1)\log\pi}+o(\sqrt{\frac{B_cT_c}{\log (B_cT_c)}}).
\end{equation}
With $\kappa=2$ for Rayleigh fading, $(\frac{\kappa-2+N_{\mathrm{r}}}{N_{\mathrm{t}}}+1)\geq 1$. We obtain the following two roots of \eqref{eq:simmetry}
\begin{equation}\label{eq:Gam:true}
\begin{split}
\sqrt{\Upsilon}^-&=\sqrt{(\frac{N_{\mathrm{r}}}{N_{\mathrm{t}}}+1)\log\pi}+\sqrt{(\frac{N_{\mathrm{r}}}{N_{\mathrm{t}}}+1)\log\pi-1}+o(\sqrt{\frac{B_cT_c}{\log (B_cT_c)}}),\\
\sqrt{\Upsilon}^+&=\sqrt{(\frac{N_{\mathrm{r}}}{N_{\mathrm{t}}}+1)\log\pi}-\sqrt{(\frac{N_{\mathrm{r}}}{N_{\mathrm{t}}}+1)\log\pi-1}+o(\sqrt{\frac{B_cT_c}{\log (B_cT_c)}}).\\
\end{split}
\end{equation}
It is ready to see that
\begin{equation}\label{eq:Gam:app}
\begin{split}
\sqrt{\Omega}^- & =\sqrt{N_{\mathrm{t}}} \sqrt{\Upsilon}^- \leq 2\sqrt{(N_{\mathrm{r}}+N_{\mathrm{t}})\log\pi}+o(\textstyle{}\sqrt{\frac{B_cT_c}{\log (B_cT_c)}}), \\
\sqrt{\Omega}^+ & =\sqrt{N_{\mathrm{t}}}\sqrt{\Upsilon}^+ \geq \frac{N_{\mathrm{t}}}{2\sqrt{(N_{\mathrm{r}}+N_{\mathrm{t}})\log\pi}} +o(\textstyle{}\sqrt{\frac{B_cT_c}{\log (B_cT_c)}}).
\end{split}
\end{equation}
 Substituting them back in \eqref{eq:defBpm} we get the points $(\delta B)^-$ and $(\delta B)^+$ as shown in \eqref{eq:dBcritical}. Therefore the true achievement of the maximum can only occur in the range $(\delta B)_\mathrm{crit}\in[(\delta B)^-,(\delta B)^+]$.

\section*{Acknowledgment}
We would like to thank the anonymous reviewers and the editor for all the helpful comments. We also 
thank Dr. Wenyi Zhang for helpful discussion on an earlier version of this paper.

\bibliographystyle{IEEEtran}

\end{document}